\documentclass[10pt]{ijnam}
\hsize=5.5 true in
\textheight=8.4 true in
\topmargin 1in


\pagespan{1}{20}
\copyrightinfo{2010}{} 

\usepackage{amssymb,amsmath}
\usepackage{amsmath,amssymb,latexsym,amsfonts}
\usepackage{amsthm}
\usepackage[dvips]{graphicx}

\newtheorem{thm}{Theorem}

\newtheorem{remark}{Remark}

\newcommand{\R}{\mathbb{R}}

\newcommand{\E}{\varepsilon}

\newcommand{\D}{^{\prime}}
\newcommand{\DD}{^{\prime\prime}}

\begin{document}

\title{
Weakly nonlinear analysis of the Hamilton--Jacobi--Bellman equation arising from pension savings management}%

\author{Zuzana Macov\'a \and Daniel \v{S}ev\v{c}ovi\v{c} }

\address{
Dept. of Applied Mathematics and Statistics, 
 Faculty of Mathematics, Physics and Informatics, Comenius University, 
 842 48 Bratislava, Slovakia
}

\email{sevcovic@fmph.uniba.sk}

%
\commby{Lubin G. Vulkov}

\date{April 30, 2009}

\subjclass[2000]{35K55 34E05 70H20 91B70 90C15 91B16}

\abstract{
The main purpose of this paper is to analyze solutions to a fully nonlinear parabolic equation arising from the problem of optimal portfolio construction. We show how the problem of optimal stock to bond proportion in the management of pension fund portfolio can be formulated in terms of the solution to the Hamilton--Jacobi--Bellman equation. We analyze the solution from qualitative as well as quantitative point of view. We construct useful bounds of solution yielding estimates for the optimal value of the stock to bond proportion in the portfolio. Furthermore we construct asymptotic expansions of a solution in terms of a small model parameter. Finally, we perform sensitivity analysis of the optimal solution with respect to various model parameters and compare analytical results of this paper with the corresponding known results arising from time-discrete dynamic stochastic optimization model.}

\keywords{Hamilton--Jacobi--Bellman equation, weakly nonlinear analysis, asymptotic expansion, fully nonlinear parabolic equation, stochastic dynamic programming, pension savings accumulation model.}

\maketitle
\pagestyle{myheadings}
\markboth{Z. Macov\'a \and D. \v{S}ev\v{c}ovi\v{c}}{Weakly nonlinear analysis of the Hamilton--Jacobi--Bellman equation}

\section{Introduction and problem formulation}

In this paper we are analyzing solutions to the Hamilton--Jacobi--Bellman equation arising from stochastic dynamic programming for optimal decision between stock and bond investments during accumulation of pension savings. 
Such an optimization problem often arises in optimal dynamic portfolio
selection and asset allocation policy for an investor who is concerned about the performance of a portfolio relative to the performance of a given benchmark (see e.g. \cite{MERTON1971,MERTON1974,MERTON1975,MERTON1992,BROWNE1995,BODIE1995,BODIE2003,KILIANOVA2009,KILIANOVA2009b}).

Consider the function $V(t,y), (t,y)\in \mathcal{D}$, defined on a domain 
$\mathcal{D} = [0,T)\times (0,\infty)$
and satisfying the following  fully nonlinear Hamilton--Jacobi--Bellman parabolic partial differential equation:
\begin{subequations}
\begin{equation}\label{HJBE}
\frac{\partial V}{\partial t}
+
\max_{\theta \in \Delta_t}
\left(
A_{\E}(\theta,t,y)\frac{\partial V}{\partial y}+ \frac{1}{2}B^2(\theta,t,y)\frac{\partial^2 V}{\partial y^2}
\right) = 0, \qquad (t,y)\in \mathcal{D},
\end{equation}
and the terminal condition at $t=T$, 
\begin{equation}\label{TC}
    V(T,y) = U(y), \quad y\in (0,\infty),  
\end{equation}
\end{subequations}
where $U = U(d)$ is a smooth strictly increasing concave bounded function 
and $\E$ is a small parameter, $0 <\E\ll 1 $. 
Moreover, we suppose that the following additional requirements are met:
\begin{enumerate}
    \item\label{A1} the admissible  set $\Delta_t = [l_t, u_t] \subset\R$ for all $0 \leq t\leq T$;
    \item\label{A2} the function $\Delta_t\ni\theta\mapsto A_{\E}(\theta,t,y)\in\R$ is (not necessarily strictly) concave in the $\theta$ variable and it is function increasing at $\theta=l_t$; 
    \item\label{A3} the function $\Delta_t\ni\theta\mapsto B^2(\theta,t,y)$ is strictly convex in the $\theta$ variable and it is decreasing at $\theta=l_t$. 
\end{enumerate}

Let us suppose for a moment that the function $y\mapsto V(t,y)$ is an increasing and strictly concave function in the $y$ variable. Then applying the first order necessary condition on the maximum of the function 
\[
\theta\mapsto A_{\E}(\theta,t,y)\frac{\partial V}{\partial y}+ \frac{1}{2}B^2(\theta,t,y)\frac{\partial^2 V}{\partial y^2}
\]
we obtain the following implicit equation for $\hat{\theta}$, the maximizer of the above function:
\begin{equation}\label{IE}
    G(\hat{\theta},t,y) = - \frac{\frac{\partial V}{\partial y}(t,y)}{y \frac{\partial^2V}{\partial y^2}(t,y)}
\quad \hbox{where}\ \ 
G(\theta,t,y) =\frac{1}{2}\frac{\frac{\partial (B^2)}{\partial \theta}}{y\frac{\partial A_{\E}}{\partial \theta}}.
\end{equation}
Since the requirements \eqref{A2}--\eqref{A3} guarantee the increase of the function $G(\theta, t,y)$ in the $\theta$ variable, there exists the inverse of $G$ and thus the unique $\hat{\theta}=\tilde{\theta}(t,y)$ such that
\[
\hat{\theta}(t,y) = 
G^{-1}\left(  - \left(\frac{\partial V}{\partial y}(t,y) \right)/\left(y\frac{\partial^2V}{\partial y^2}(t,y)\right)\right).
\]
Then the optimal value of $\theta$ solving \eqref{HJBE} with the terminal condition \eqref{TC} is given by
\begin{equation}\label{OptTheta}
 \theta^{\ast}(t,y) = 
\min\{u_t,\hat{\theta}(t,y)\}.
\end{equation}
The problem \eqref{HJBE} can be now treated as a fully nonlinear parabolic partial differential equation of the form:
\begin{subequations}
\begin{equation}\label{HJBEopt}
    \frac{\partial V}{\partial t}+\mathcal{F}(t,y,V,\frac{\partial V}{\partial y}, \frac{\partial^2 V}{\partial y^2}) = 0
\end{equation}
where
\begin{equation}
\mathcal{F}(t,y,V,\frac{\partial V}{\partial y}, \frac{\partial^2 V}{\partial y^2}) 
= A_{\E}(\theta^\ast(t,y),t,y) \frac{\partial V}{\partial y}
+
\frac{1}{2}B^2(\theta^\ast(t,y),t,y)  \frac{\partial^2 V}{\partial y^2}
\end{equation}
\end{subequations}
where $\theta^\ast$ is given by (\ref{OptTheta}) and $\hat\theta(t,y) 
= G^{-1}\left(  - \left(\frac{\partial V}{\partial y}(t,y) \right)/\left(y\frac{\partial^2V}{\partial y^2}(t,y)\right)\right)$ depends itself on the solution $V$ and its derivatives. The solution is subject to the terminal condition $V(T,y) = U(y)$ where  $V = V(t,y)$ for $y>0$ and $0 \leq t \leq T$. Furthermore, $\frac{\partial \mathcal{F}}{\partial q}>0$. 

\medskip
The application of this study to financial markets, particularly to the theory of the optimal portfolio construction, has a strong impact on the special choice of the functions $A_{\E}$ and $B$ used in the original formulation of the studied problem \eqref{HJBE}. Hence let us consider
\begin{equation}\label{SC}
    A_{\E}(\theta,t,y) = \E+[\mu_t(\theta) - \beta_t]y,  \qquad \qquad B(\theta,t,y) = \sigma_t(\theta)y, 
\end{equation}where $\{\beta_t; 0 \leq t \leq T\}$, $\{\mu_t(\theta); 0 \leq t \leq T\}$ and $\{\sigma_t(\theta); 0 \leq t \leq T\}$ are assumed to be given deterministic processes for any choice of the control parameter $\theta \in \Delta_t$. 

Moreover, if $V(t,y)$ is strictly convex and increasing in the $y$ variable, then with regard to assumptions \eqref{A2}--\eqref{A3} the monotonicity of the function  $\theta\mapsto G(\theta,t,y)$ is guaranteed. Thus the unique maximizer $\tilde{\theta}$ rising up from the implicit equation \eqref{IE} fulfills
\begin{equation}\label{STopt}
    \tilde{\theta} (t,y)= G^{-1}\left(  - \frac{\frac{\partial V}{\partial y}(t,y)}{y\frac{\partial^2V}{\partial y^2}(t,y)}\right) 
\quad \hbox{where} \ 
 G(\theta,t)= \frac{1}{2}\frac{[\sigma_t^2(\theta)]\D}{\mu_t(\theta)\D}.
\end{equation}
Here $(.)^\prime$ denotes the derivative with respect to the $\theta$ variable and the inverse $G^{-1}$ is taken with respect to the $\theta$ variable.

The paper is organized as follows. In the next section we present a stochastic dynamic optimization model that can be used for solving the dynamic problem of optimal stock to bond proportion when managing the saver's pension fund portfolio. We recall key steps of derivation of a time discrete version of the model. Then we propose a time continuous version of the stochastic dynamic optimization problem. We take as our setting the standard continuous-time framework pioneered by Merton, Samuelson, Bodie, Browne and others (c.f. \cite{MERTON1971,MERTON1974,MERTON1975,MERTON1992,BROWNE1995,BODIE1995,BODIE2003,SAM1969,KILIANOVA2009}). It will be shown that the intermediate value (or utility) function satisfies certain fully nonlinear parabolic equation of Hamilton--Jacobi--Bellman type. In Section 3, following ideas of the recent paper by Abe and Ishimura \cite{ISHIMURA2008}, we introduce a Riccati like transformation in order to transform the fully nonlinear Hamilton--Jacobi--Bellman equation into the quasilinear parabolic equation (see also \cite{ISHIMURA2009}). Then we derive some useful bounds of a solution by using a parabolic comparison principle. These parabolic estimates are also used in order to provide bounds for the original variables -- both the intermediate utility function and for the optimal stock to bond proportion. A solution to the transformed quasilinear parabolic equation is constructed by means of weakly nonlinear analysis, i.e. we seek a solution in the form of infinite power series with respect to a small parameter representing yearly percentage of a salary transfered to saver's pension account. We show that first three terms in the expansion can be explicitly found. We also provide a recurrent formula for calculating higher order terms in the expansion. The rest of Section 3 is devoted to the sensitivity analysis of the optimal stock to bond proportion with respect to various model parameters, in particular, to the yearly percentage transfer and to the saver's risk aversion. In Section 4 we demonstrate results of numerical simulations with model parameters corresponding to the second pillar of the pension system in Slovakia. We compare the results obtained with help of explicit approximations of an optimal solution to the time continuous Hamilton--Jacobi--Bellman equation with those obtained by the time discrete model from  Melicher\v{c}\'{\i}k {\it et al.}  \cite{KILIANOVA2006,MS2009}

\section{Motivation and model derivation}
In this section we first  recall a discrete dynamic stochastic optimization
problem arising in optimal portfolio selection. The discrete version of this model has been derived
by Kilianov\'a, Melicher\v{c}\'{\i}k and the author in \cite{KILIANOVA2006}. It was applied for
solving a problem of construction of an optimal stock to bond proportion in pension fund selection 
for the second pillar of the Slovak pension system. In what follows, we
recall key steps in derivation of the discrete dynamic stochastic
optimization pension savings model due to  Melicher\v{c}\'{\i}k {\it et al.}
in \cite{KILIANOVA2006}. In the second part of this section we shall 
generalize the model from its discrete version to a
continuous one. It will be shown that the continuous model for solving a problem of 
optimal stock to bond proportion in pension fund selection can be
reformulated in terms of a fully nonlinear parabolic equation also referred to as the Hamilton--Jacobi--Bellman equation.

In the discrete optimal pension fund selection model due to Melicher\v{c}\'{\i}k {\it et al.} \cite{KILIANOVA2006,MS2009}, a future pensioner with the expected retirement time in $T$ years transfers regularly once a year an $\E$-part of his yearly salary with the deterministic rate of growth $\beta_t$ to the pension fund investing in financial market with the yearly stochastic return $r_t$. More precisely, we denote  by $B_t$ his yearly salary at the year $t$. Then the budget constraint equation for the total accumulated sum $Y_t$ in his pensioner's account reads as follows:
\[
Y_{t+1} = (1+ r_t) Y_t + \E B_{t+1},\ \ \hbox{for}\ t=1,2, ..., T-1,\qquad Y_1 = \E B_1.
\]
Supposing the wage growth $\beta_t$ is known, we have the relation $B_{t+1} = (1+\beta_t) B_t$ between two consecutive yearly salaries. At the time of retiring a future pensioner will aim at maintaining  his living standards compared to the level of the last salary at the retirement time $t=T$. 
Therefore the absolute value of the total saved sum $Y_T$ at the time of retirement $T$ does not represent the quantity a future pensioner will be taking care about. More important information for him is expressed by a ratio of the cumulative saved sum $Y_T$ and the yearly salary $B_T$, i.e. $y_t=Y_t/B_t$ at $t=T$. In terms of the quantity $y_t$ representing the number of yearly salaries already saved at time $t$, the budget-constraint equation can be reformulated as follows:
\[
y_{t+1} = y_t (1+ r_t)(1+\beta_t)^{-1}  + \E ,\ \ \hbox{for}\ t=1,2, ..., T-1,\qquad y_1 = \E.
\]

For the sake of simplicity, we assume that the investment strategy of the pension fund at time $t$ is given by the proportion $\theta\in[0,1]$ of stocks and $1-\theta$ of bonds and that the fund return $r_t$ is normally distributed 
with the mean value $\mu_t(\theta)$ and dispersion $\sigma_t^2(\theta)$ for any choice of the stock to bond proportion $\theta$. It means that
\begin{equation} 
\label{rNormal} r_t(\theta) \sim N(\mu_t(\theta), \sigma^2_t(\theta)), \quad i.e.\quad 
r_t(\theta) = \mu_t(\theta) + \sigma_t(\theta) Z
\end{equation}
where $Z\sim N(0,1)$ is a normally distributed random variable having the probabilistic density function $f(z)=\frac{1}{\sqrt{2\pi}} \exp(-z^2/2)$.
Both $\mu_t$ and $\sigma_t^2$ depend directly on the choice of parameter $\theta$ 
representing stock to bond proportion in the portfolio. It assumed to belong 
to the prescribed admissible set $\Delta_t = [l_t, u_t] \subseteq [0,1]$ for any time 
$t \in [0,T]$. The admissible set $\Delta_t$ is subject to governmental regulations
that may be imposed on the stock to bond proportion in a specific time $t \in [0,T]$.
At each time $t$, the mean value and volatility of the fund return $r_t$ can be expressed in terms of expected values of returns $\mu_t^{(s)}, \mu_t^{(b)}$ and volatilities $\sigma_t^{(s)}, \sigma_t^{(b)}$ of stocks and bonds as follows:
\begin{equation}\label{mu}
    \mu_t(\theta) = \theta \mu_t^{(s)}+(1-\theta)\mu_t^{(b)}, 
\ \ 
    \sigma_t^2(\theta)  = \theta^2[\sigma_t^{(s)}]^2+(1-\theta)^2[\sigma_t^{(b)}]^2+2\theta (1-\theta)
\sigma_t^{(s)}\sigma_t^{(b)}\varrho_t, 
\end{equation}
where $\varrho_t \in [-1,1]$ is a correlation coefficient between the returns
on stocks and bonds at time $t$ and the time-independent values of the parameters $\mu^{(s)}$, $\mu^{(b)}$, $\sigma^{(s)}$ and $\sigma^{(b)}$ are known at time $t \in [0,T]$, they follow their relevant mutually independent Markov processes.

Thus the time-evolution of the number of allocated yearly salaries can be formulated by the following recurrent equation
\begin{eqnarray} 
\label{DRE} 
y_{t+1} &=& G_t^1(y_t, r_t(\theta_t) ), \qquad y_1 = \varepsilon, \\ 
\hbox{where}\ \ G_t^1(y, r_t) &=&  \E+y\frac{1+r_t}{1+\beta_t} \quad  \text{for $t = 1,2,\ldots, T-1$}.\nonumber 
\end{eqnarray}
Notice that $r_t(\theta)$ is the only stochastic variable appearing in the recurrent definition of the processes for the amount $y_t$ of yearly saved salaries. 
Our aim is to determine the optimal strategy, i.e. the optimal value of the weight $\theta_t$ at each time $t$ that maximizes the contributor's utility from the terminal  wealth allocated on their pension account, and so taking into account knowledge of the saver's utility function $U$, the problem of discrete stochastic dynamic programming can be formulated as
\begin{equation}\label{DSP}
\max_{\mathcal{S}}\mathbb{E}(U(y_T))
\end{equation} 
subject to the constraint \eqref{DRE} where the maximum in the stochastic dynamic problem is taken over all non-anticipative strategies, time sequences of $\theta_t$ stocks proportions, $\mathcal{S} = \{(t,\theta_t)\;|\; t = 1,\ldots, T\}$. Therefore the optimal strategy of the problem \eqref{DSP} subject to \eqref{DRE} is the solution to the Bellman equation
\begin{equation}
W(t,y) = \left\{ 
\begin{matrix}
U(y), \hfill& t = T,\hfill \cr
\displaystyle{\max_{\theta \in\Delta_t}}\ \  \mathbb{E}_Z \left( W(t+1, F_t^1(\theta,y,Z) )\right),\hfill &  t=T-1, ..., 2,1,\hfill 
\end{matrix} 
\right.
\end{equation}
where $F_t^1(\theta,y,z) = G_t^1(y, \mu_t(\theta) + \sigma_t(\theta) z)$.

In this paper the major object of our study is the continuous version of the discrete model proposed above. Instead of time intervals $[t,t+1], t=1, ..., T-1,$ representing discrete saving periods we shall assume that the proportion of the size $\E \tau$ of saving deposits is transfered to the saver account on short time intervals $[0,\tau], [\tau, 2\tau], ..., [T-\tau, T]$, where $0<\tau\ll 1$ is a small time increment.
The increase of the saver's account at time $t+\tau$ can be therefore expressed as
\[
y_{t+\tau} = F_t^{\tau}(\theta, y_t, Z), \quad \hbox{where}\ \ Z\sim N(0,1),\ \hbox{and}
\]
\begin{equation}\label{CRE}
F_t^{\tau}(\theta, y_t, z)
= y_t \exp\left( 
\left(
\mu_t(\theta) - \beta_t -\frac{1}{2} \sigma_t^2(\theta) 
\right) \tau  + \sigma_t(\theta) z \sqrt{\tau}
\right)
+ \varepsilon \tau
\end{equation}
for $0 < t \leq T$. In the above expression for the function $F_t^\tau$ we have applied It\^{o}'s lemma (c.f. Kwok \cite{KWOK1998}) in order to generalize the discrete version of $F_t^1$ with $\tau=1$ for the case when $\tau$ is a sufficiently variable.

Let us denote by $V(t,y)$ the intermediate value (utility) function at time $t\in[0,T]$ corresponding to the saver's wealth of $y>0$ saved yearly salaries in her account. 
Making use of the integral definition of the expected value $\mathbb{E}_Z \left( W(t+1, F_t^1(\theta,y,Z) )\right)$ the continuous variant of the discrete backward value function for any choice of the small parameter $0 < \tau \ll 1$ takes the 
subsequent form
\begin{eqnarray*}
\label{CVF}
 &&V(T,y) =  U(y), \qquad t = T,\\
 &&V(t,y) = \max_{\theta \in\Delta_t} \int_{\R}V(t+\tau, F_t^{\tau}(\theta, y, z))f(z)\, dz,
\qquad 0 \leq t<t+\tau\le  T.
\end{eqnarray*}
Therefore for the limit $\tau \equiv dt \rightarrow 0^{+}$ we obtain
\begin{equation}
\label{maxEV}
\max_{\theta \in\Delta_t} \mathbb{E}\left( \frac{V(t+dt,y_{t+dt}) - V(t,y_t)}{dt}\;\Big| \;y_t = y\right) = 0 
\end{equation}

In general, we suppose that there exist functions $A_\E(\theta, t, y)$ and 
$B(\theta, t, y)$ such that the random process $y_t, t\in [0,T],$ is driven by the following stochastic differential equation
\begin{equation}\label{dySDE}
    dy_t = A_\E (\theta_t, t, y_t)dt + B(\theta_t, t, y_t)dW_t,
\end{equation} 
where $\{W_t, 0 \leq t \leq T\}$ is the Wiener process.
Then, by using It\^o's lemma (c.f. Kwok \cite{KWOK1998}) we obtain the expression for the 
differential $dV = V(t+dt, y_{t+dt}) - V(t,y_t)$ in the form of a function of two independent variables $t$ and $y$:
\begin{equation}\label{V-V}
\begin{split}
V(t+dt,y_{t+dt})&- V(t,y_t) \\
=& \Big[\frac{\partial V}{\partial t}(t,y_t)
+A_\E(\theta_t, t, y_t)\frac{\partial V}{\partial y}(t,y_t)
+\frac{1}{2}B^2(\theta_t, t, y_t)\frac{\partial^2 V}{\partial y^2}(t,y_t)\Big]dt \\
& + B(\theta, t, y_t)\frac{\partial V}{\partial y}(t,y_t)d W_t\,.
\end{split}
\end{equation}
Taking the conditional expectation of \eqref{V-V}, the second part in the expression above can be omitted since stochastic variables $B(\theta, y_t)\frac{\partial V}{\partial y}(t,y_t)$ and $d W_t$ are independent and $\mathbb{E}(d W_t)=0$. Hence
\begin{eqnarray*}
&&\mathbb{E}_Z \left(\frac{V(t+dt,y_{t+dt}) - V(t,y_t)}{dt} \;\biggl|\:y_t = y\right)\\
 &&\ \ \ \ = \frac{\partial V}{\partial t}(t,y) 
+ A_\E(\theta_t, t, y)\frac{\partial V}{\partial y}(t,y)+\frac{1}{2}B^2(\theta_t, t, y)\frac{\partial^2 V}{\partial y^2}(t,y).
\end{eqnarray*}
Letting $dt \rightarrow 0^{+}$, the maximum criterion in \eqref{maxEV} can be rewritten as follows
\begin{equation*}
\begin{split}
0 & =  \max_{\theta \in\Delta_t} \mathbb{E}\left(\frac{V(t+dt,y_{t+dt}) - V(t,y_t)}{dt}\;|\;y_t = y
\right) \\
& = \frac{\partial V}{\partial t}(t,y) + \max_{\theta
\in\Delta_t}\Big\{A_\E(\theta, t, y)\frac{\partial V}{\partial y}(t,y)
+\frac{1}{2}B^2(\theta, t, y)\frac{\partial^2 V}{\partial y^2}(t,y)\Big\}.
\end{split}
\end{equation*}

In our modeling what remains is to determine the concrete form of the functions $A_\E(\theta, t,y)$ and $B(\theta,t,y)$ driving the stochastic process (\ref{dySDE}) for $y_t$. Now, it follows from (\ref{CRE}) that, for an infinitesimal time increment $0<\tau=dt\ll 1$, we can apply It\^o's lemma to obtain the expression for the differential $dy_t = y_{t+dt} - y_t$:
\[
d y_t = \E dt + y_t \left( (\mu_t(\theta) - \beta_t) dt + \sigma_t(\theta) d W_t\right)
\]
where $dW_t = W_{t+dt} - W_t = Z \sqrt{dt}, Z\sim N(0,1)$, is the differential of the Wiener process. This way we have shown that the functions $A(\theta, t,y)$ and $B(\theta,t,y)$ driving the process (\ref{dySDE}) for the saver's wealth $y_t$ have the form:
\[
A_\E(\theta,t, y) = \varepsilon+\big[\mu_t(\theta) - \beta_t\big]y 
\qquad\hbox{and}\qquad B(\theta, t,y) = \sigma_t(\theta)y .
\]
\smallskip
In summary, we have derived the following result:
\medskip
\begin{thm}\label{HJBEthm}
The intermediate utility function $V = V(t,y)$ satisfies the following fully nonlinear partial differential Hamilton--Jacobi--Bellman equation:
\begin{equation}\label{HJBE1}
\frac{\partial V}{\partial t}(t,y) + \max_{\theta \in\Delta_t}\Big\{A_\E(\theta,
t, y)\frac{\partial V}{\partial y}(t,y)+\frac{1}{2}B^2(\theta, t,y)\frac{\partial^2 V}{\partial y^2}(t,y)\Big\} = 0
\end{equation}
and the terminal condition
$V(T,y) = U(y)$ for $y>0$ where
$A_\E(\theta,t, y) = \varepsilon+\big[\mu_t(\theta) - \beta_t\big]y$
and $B(\theta, t,y) = \sigma_t(\theta)y$.
\end{thm}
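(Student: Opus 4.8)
The plan is to assemble the differential expression for the intermediate value function $V(t,y)$ by combining the limit of the discrete dynamic programming recursion with the specific stochastic dynamics of the saver's wealth process. The essential content of the theorem is twofold: first, that the optimality principle in the limit $dt\to 0^+$ yields equation \eqref{maxEV}, and second, that the concrete drift and diffusion coefficients $A_\E$ and $B$ have precisely the claimed form. Since all of the intermediate computation has already been carried out in the body of the section, my proof will chiefly consist of organizing these results into the stated conclusion.

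First I would recall the continuous Bellman relation \eqref{CVF} obtained by passing from the discrete value function $W$ to the continuous-time $V$ via the integral definition of the conditional expectation $\mathbb{E}_Z\!\left(W(t+1,F_t^1(\theta,y,Z))\right)$. Taking the limit $\tau\equiv dt\to 0^+$ produces the infinitesimal optimality criterion \eqref{maxEV}. Next, I would invoke the assumption that $y_t$ solves the stochastic differential equation \eqref{dySDE} and apply It\^o's lemma to expand the increment $dV=V(t+dt,y_{t+dt})-V(t,y_t)$ as in \eqref{V-V}, isolating the drift term multiplying $dt$ and the martingale term $B(\theta,t,y_t)\frac{\partial V}{\partial y}(t,y_t)\,dW_t$. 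Taking the conditional expectation kills the stochastic integral because $dW_t$ has zero mean and is independent of the predictable integrand, leaving only the deterministic drift. Substituting this into \eqref{maxEV} and moving the $t$-derivative outside the maximum (it does not depend on $\theta$) yields \eqref{HJBE1}.

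It remains to identify $A_\E$ and $B$ explicitly. For this I would start from the continuous evolution law \eqref{CRE} for $y_{t+\tau}$, which was itself derived by applying It\^o's lemma to generalize the discrete update $F_t^1$. Expanding the exponential to first order in the infinitesimal increment $\tau=dt$ and using $dW_t=Z\sqrt{dt}$ with $Z\sim N(0,1)$, I would read off the differential $dy_t=\E\,dt+y_t\left((\mu_t(\theta)-\beta_t)\,dt+\sigma_t(\theta)\,dW_t\right)$. Matching this against the generic form \eqref{dySDE} identifies the coefficients as $A_\E(\theta,t,y)=\varepsilon+[\mu_t(\theta)-\beta_t]y$ and $B(\theta,t,y)=\sigma_t(\theta)y$, completing the proof.

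The step I expect to require the most care is the passage to the limit $dt\to 0^+$ and the simultaneous first-order expansion of the exponential in \eqref{CRE}: one must verify that the $\frac{1}{2}\sigma_t^2(\theta)\tau$ correction inside the exponent, together with the quadratic variation $(dW_t)^2=dt$ produced by It\^o's term, combine so that the drift of $y_t$ contains no residual $\sigma_t^2$ contribution. The It\^o correction from expanding $e^{-\frac{1}{2}\sigma_t^2(\theta)\tau+\sigma_t(\theta)Z\sqrt{\tau}}$ exactly cancels the deterministic $-\frac{1}{2}\sigma_t^2(\theta)$ term, which is precisely what makes the drift linear in $y$ with coefficient $\mu_t(\theta)-\beta_t$ rather than $\mu_t(\theta)-\beta_t-\frac{1}{2}\sigma_t^2(\theta)$. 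Everything else is a routine consequence of the optimality principle and It\^o calculus.
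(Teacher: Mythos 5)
Your proposal is correct and follows essentially the same route as the paper: pass from the discrete Bellman recursion to the infinitesimal optimality criterion \eqref{maxEV}, apply It\^o's lemma to the SDE \eqref{dySDE} and take the conditional expectation to discard the martingale term, then read off $A_\E$ and $B$ from the first-order expansion of \eqref{CRE}. Your observation that the It\^o quadratic-variation term exactly cancels the $-\tfrac{1}{2}\sigma_t^2(\theta)\tau$ in the exponent is precisely the computation the paper performs (implicitly) to obtain the drift $\varepsilon + [\mu_t(\theta)-\beta_t]y$.
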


Henceforth, we shall make the following structural assumption on bond and stock average yields and their standard deviations:

\medskip
\noindent (H)\hskip 0.5truecm
$\left\{
\begin{matrix}
\hfill b_t & := \sigma_t^{(b)}[\sigma_t^{(b)}-\varrho_t \sigma_t^{(s)}]>0,\quad
a_t := [\sigma_t^{(s)}]^2+[\sigma_t^{(b)}]^2 - 2 \varrho_t \sigma_t^{(s)}\sigma_t^{(b)}> b_t\,,\cr
\Delta \mu_t & := \mu_t^{(s)} - \mu_t^{(b)} >0.
\hfill
\end{matrix} 
\right.$
\medskip

\noindent The last condition $a_t>b_t$ is equivalent to the inequality $\sigma_t^{(s)}-\varrho_t \sigma_t^{(b)}>0$ whereas the first condition for $b_t$ can be reformulated as the inequality $\sigma_t^{(b)}-\varrho_t \sigma_t^{(s)}>0$. Notice that these assumptions are generically fulfilled in stable financial markets since expected stock returns should outperform bond returns and the correlation $\varrho_t$ between stock and bond returns is negative in typical market situations (c.f. \cite{MU2004,KILIANOVA2006,MS2009}) and the discussion on model parameters in Section 4.

\medskip 
\subsection{Hamilton--Jacobi--Bellman equation for a super-optimal solution}

\medskip
Unfortunately, due to the restriction $\theta\in [0,1]$, the Hamilton--Jacobi--Bellman equation (\ref{HJBE1}) illustrates a difficult problem that cannot be, in general, solved analytically. Nevertheless, as we are approaching the problem of optimal pension fund portfolio construction we may relax the bound $\theta\le 1$ and allow the control parameter $\theta$ to vary over all nonnegative numbers, i.e. $\theta\in \tilde\Delta_t \equiv[0,\infty)$. Taking $\theta>1$ would correspond to the situation when we allow for the so-called short positions in bonds because $1-\theta < 0$ in that case. 

Instead of the Hamilton--Jacobi--Bellman equation (\ref{HJBE1}) we shall consider a  modified problem in which we allow the control parameter $\theta$ to vary over positive real numbers, 
i.e. 
\begin{equation}
\label{HJBE1-modified}
\frac{\partial V}{\partial t}(t,y) + \max_{\theta
\in[0,\infty)}\Big\{A_\E(\theta,t,y)\frac{\partial V}{\partial y}(t,y)
+\frac{1}{2}B^2(\theta,t,y)\frac{\partial^2 V}{\partial y^2}(t,y)\Big\} = 0.
\end{equation}

Under the assumption $\frac{\partial V}{\partial y}(t,y)>0, \frac{\partial^2 V}{\partial y^2}(t,y)<0$, the hypothesis (H) 
and taking into account the definitions  \eqref{mu} of $\mu_t(\theta)$ and $\sigma_t(\theta)$, the unique solution $\tilde{\theta}(t,y)$ to the implicit relationship  \eqref{STopt} is given by
\begin{subequations}
\begin{equation}
    \tilde{\theta}(t,y) = \frac{b_t}{a_t} 
- \frac{\Delta\mu_t}{a_t} \frac{\frac{\partial V}{\partial y} (t,y)}{y \frac{\partial^2 V}{\partial y^2} (t,y)}.
\end{equation}
\end{subequations}
Hence the Hamilton--Jacobi--Bellman equation \eqref{HJBE1-modified} can be rewritten as follows:
\begin{equation}\label{HJBE2}
\begin{split}
    0 = &\frac{\partial V}{\partial t}(t,y) + \big[\E+y(\mu_t^{(b)} - \beta_t+\frac{b_t}{a_t}\Delta\mu_t)\big]\frac{\partial V}{\partial y}(t,y)\\
&+\frac{1}{2a_t}[\sigma_t^{(b)}]^2[\sigma_t^{(s)}]^2(1-\varrho_t^2)y^2 \frac{\partial^2V}{\partial y^2}(t,y) - \frac{1}{2}\frac{(\Delta\mu_t)^2}{a_t}\frac{[\frac{\partial\;V}{\partial y}(t,y)]^2}{\frac{\partial^2 V}{\partial
y^2}(t,y)}.
\end{split}
\end{equation}

In what follows, we shall compare solutions $V(t,y)$ of the original Hamilton--Jacobi--Bellman equation  (\ref{HJBE1}) and the modified equation (\ref{HJBE2}). 

\medskip
\begin{thm}\label{HJBEcomparison}
Let $\Delta_t, \tilde\Delta_t \subset \R$ be two admissible sets such that $\Delta_t \subseteq \tilde\Delta_t$ for any time $t\in[0,T]$. Let $V(t,y)$ and $\tilde V(t,y)$ be solutions to the corresponding Hamilton--Jacobi--Bellman equations with admissible sets $\Delta_t$ and $\tilde \Delta_t$, i.e. 
\begin{eqnarray}
&&\frac{\partial V}{\partial t}(t,y) + \max_{\theta \in\Delta_t }\Big\{A_\E(\theta,t,y)\frac{\partial V}{\partial y}(t,y)
+\frac{1}{2}B^2(\theta,t,y)\frac{\partial^2 V}{\partial y^2}(t,y)\Big\} = 0, 
\label{prva}
\\
&&\frac{\partial \tilde V}{\partial t}(t,y) + \max_{\theta \in\tilde\Delta_t }\Big\{A_\E(\theta,t,y)\frac{\partial \tilde V}{\partial y}(t,y)
+\frac{1}{2}B^2(\theta,t,y)\frac{\partial^2 \tilde V}{\partial y^2}(t,y)\Big\} = 0,
\label{druha}
\end{eqnarray}
for $t\in [0,T), y>0,$ and satisfying the same terminal condition $V(T,y) = \tilde V(T,y) = U(y)$ for $y>0$. Then the solution $\tilde V$ of equation (\ref{druha}) is super-optimal for equation (\ref{prva}), i.e. 
\[
\frac{\partial \tilde V}{\partial t}(t,y) + \max_{\theta \in\Delta_t }\Big\{A_\E(\theta,t,y)\frac{\partial \tilde V}{\partial y}(t,y)
+\frac{1}{2}B^2(\theta,t,y)\frac{\partial^2 \tilde V}{\partial y^2}(t,y)\Big\} \le  0.
\]
Moreover, $V(t,y) \le \tilde V(t,y)$ for any $t\in [0,T], y >0$.
\end{thm}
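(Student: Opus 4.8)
The plan is to handle the two assertions separately: the super-optimality is a direct consequence of the monotonicity of the pointwise maximum with respect to the admissible set, while the inequality $V\le\tilde V$ will follow from a parabolic comparison principle applied to the linearization of \eqref{prva} along the optimal control of $V$.

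First I would prove super-optimality. Fix $(t,y)$ and consider the Hamiltonian $\theta\mapsto A_{\E}(\theta,t,y)\frac{\partial\tilde V}{\partial y}+\tfrac12 B^2(\theta,t,y)\frac{\partial^2\tilde V}{\partial y^2}$. Since $\Delta_t\subseteq\tilde\Delta_t$, its maximum over the smaller set $\Delta_t$ cannot exceed its maximum over $\tilde\Delta_t$. Adding $\partial_t\tilde V$ and using that $\tilde V$ solves \eqref{druha} exactly, the left-hand side of the asserted inequality is bounded above by the left-hand side of \eqref{druha}, which vanishes; this is precisely
\[
\frac{\partial\tilde V}{\partial t}+\max_{\theta\in\Delta_t}\Big\{A_{\E}(\theta,t,y)\frac{\partial\tilde V}{\partial y}+\tfrac12 B^2(\theta,t,y)\frac{\partial^2\tilde V}{\partial y^2}\Big\}\le 0 .
\]

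For the comparison itself I would freeze the control coming from $V$. Let $\theta^{\ast}(t,y)\in\Delta_t$ be the maximizer realizing \eqref{prva}, so that along $\theta^{\ast}$ the equation \eqref{prva} becomes the linear identity $\partial_t V+A_{\E}(\theta^{\ast})\frac{\partial V}{\partial y}+\tfrac12 B^2(\theta^{\ast})\frac{\partial^2 V}{\partial y^2}=0$. The crucial point is that $\theta^{\ast}(t,y)\in\Delta_t$ is an \emph{admissible} (generally non-optimal) choice in the maximization for $\tilde V$; hence the super-optimality inequality just established gives $\partial_t\tilde V+A_{\E}(\theta^{\ast})\frac{\partial\tilde V}{\partial y}+\tfrac12 B^2(\theta^{\ast})\frac{\partial^2\tilde V}{\partial y^2}\le 0$. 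Note that I must freeze the optimizer of $V$ and not that of $\tilde V$, since the latter need not lie in $\Delta_t$. Subtracting these two relations, the difference $w:=V-\tilde V$ satisfies the linear parabolic differential inequality
\[
\frac{\partial w}{\partial t}+A_{\E}(\theta^{\ast},t,y)\frac{\partial w}{\partial y}+\tfrac12 B^2(\theta^{\ast},t,y)\frac{\partial^2 w}{\partial y^2}\ge 0,\qquad w(T,y)=0,
\]
where $\theta^{\ast}(t,y)$ is now a prescribed function, so the coefficients are fixed.

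Finally I would invoke the parabolic maximum principle. Reversing time via $s=T-t$ turns the terminal-value inequality into a forward one for the operator $\partial_s-A_{\E}(\theta^{\ast})\partial_y-\tfrac12 B^2(\theta^{\ast})\partial_{yy}$, for which $w$ is a subsolution with vanishing initial datum, and the comparison principle then forces $w\le 0$, i.e. $V\le\tilde V$. The hard part will be justifying this maximum principle on the unbounded, degenerate domain $(0,\infty)$: with $B(\theta,t,y)=\sigma_t(\theta)y$ the diffusion vanishes at $y=0$, while the drift $A_{\E}=\E+[\mu_t(\theta)-\beta_t]y$ reduces there to $\E>0$ and points into the domain, so by Fichera's criterion no boundary condition is needed at $y=0$; at $y\to\infty$ one needs a growth restriction on $w$, which is supplied by the boundedness of $U$, and hence of $V$ and $\tilde V$. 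One must also check that $\theta^{\ast}(t,y)$ is regular enough (measurable and locally bounded) for the frozen-coefficient operator to admit the comparison principle, which follows from the explicit formula for $\tilde\theta(t,y)$ in \eqref{STopt} together with the truncation \eqref{OptTheta}.
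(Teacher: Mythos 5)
Your proof follows essentially the same route as the paper's: the first part is the identical monotonicity-of-the-maximum argument, and the second part likewise freezes the optimizer $\theta^{\ast}\in\Delta_t$ of the smaller problem, obtains the linear equation for $V$ and the corresponding differential inequality for $\tilde V$ along $\theta^{\ast}$, and concludes $V\le\tilde V$ from the parabolic comparison principle. The only difference is that you spell out the technical justification of that principle on the degenerate unbounded domain (Fichera condition at $y=0$, growth control at infinity, regularity of $\theta^{\ast}$), which the paper simply delegates to a citation of Protter--Weinberger.
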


\begin{proof}
The proof of the first part is rather simple and takes into account the inequality 
\[
\max_{\theta \in \Delta_t }\Big\{A_\E \partial_y \tilde V + \frac{1}{2} B^2 \partial^2_y \tilde V \Big\}
\le 
\max_{\theta \in \tilde \Delta_t }\Big\{A_\E \partial_y \tilde V + \frac{1}{2} B^2 \partial^2_y \tilde V \Big\} = - \partial_t\tilde V,
\]
because $\Delta_t \subseteq \tilde\Delta_t$. The second part easily follows from the parabolic comparison principle. Indeed, let $\theta^\ast(t,y)\in\Delta_t \subset\tilde\Delta_t$ be the optimal solution to (\ref{prva}), i.e. $\theta^\ast$ is the argument of the maximum operator in (\ref{prva}). Hence
\begin{eqnarray*}
&&\frac{\partial V}{\partial t}(t,y) + \Big\{A_\E(\theta^\ast(t,y),t,y)\frac{\partial V}{\partial y}(t,y) +\frac{1}{2}B^2(\theta^\ast(t,y),t,y)\frac{\partial^2 V}{\partial y^2}(t,y)\Big\} = 0, 
\\
&&\frac{\partial \tilde V}{\partial t}(t,y) + \Big\{A_\E(\theta^\ast(t,y),t,y)\frac{\partial \tilde V}{\partial y}(t,y)
+\frac{1}{2}B^2(\theta^\ast(t,y),t,y)\frac{\partial^2 \tilde V}{\partial y^2}(t,y)\Big\} \le 0.
\end{eqnarray*}
Therefore, applying the parabolic comparison principle (see e.g. \cite{PROTTER}) we conclude the inequality $V(t,y) \le \tilde V(t,y)$ for any $t\in[0,T]$ and $y>0$, as claimed.
\end{proof}

The above theorem enables us to refer to a solution $V$ to the modified Hamilton--Jacobi--Bellman equation (\ref{HJBE1-modified}) to as a super-optimal solution to the original equation  (\ref{HJBE1}).

\section{Transformation to a quasi-linear parabolic equation}

In what follows, we shall simplify our model by assuming all the model parameters to be constant with respect to time $t\in [0,T]$, i.e. $\mu_t^{(b)}=\mu^{(b)}, \mu_t^{(s)}=\mu^{(s)}, \sigma_t^{(b)}=\sigma^{(b)}, \sigma_t^{(s)}=\sigma^{(s)}, \varrho_t=\varrho, \beta_t=\beta$. Consequently, $a_t=a, b_t=b, \Delta\mu_t =\Delta\mu$. 

Following ideas borrowed from the recent paper by Abe and  Ishimura \cite{ISHIMURA2008} we introduce the Riccati-like transformation
\begin{equation*}
\varphi (t,y) = -\frac{\frac{\partial^2 V}{\partial y^2}(t,y)}{\frac{\partial V}{\partial y}(t,y)}, 
\end{equation*}
and the auxiliary function
$g(t,y) = \E+ \frac{(\Delta\mu)^2}{2a}\frac{1}{\varphi(t,y)}+\alpha y - \frac{1}{2}c^2 y^2
\varphi(t,y)$ where
\begin{equation}
\label{alpha}
    \alpha = \mu^{(b)} - \beta+\frac{b}{a}\Delta\mu
\qquad\hbox{and}\qquad c =
\sigma^{(b)}\sigma^{(s)}\sqrt{\frac{1-\varrho^2}{a}}.
\end{equation}
Then equation \eqref{HJBE2} can be transformed to the form
\[
\frac{\partial V}{\partial t}(t,y) +g(t,y)\frac{\partial V}{\partial y}(t,y) = 0.
\]
In terms of the transformed function $\varphi$, 
\begin{equation}
    \frac{\partial \varphi}{\partial t}(t,y) = \frac{\partial}{\partial y}
\left( 
\frac{\partial g}{\partial y}(t,y) - \varphi(t,y)g(t,y)
\right)
\end{equation}
Moreover, the unique solution \eqref{HJBE1-modified} to the implicit relationship \eqref{IE} is given by
\begin{equation}\label{ttm}
    \tilde{\theta}(t,y) = \frac{b}{a} +\frac{\Delta\mu}{y \varphi (t,y)}.
\end{equation}
Let us introduce the following change of independent variables
\[
x = \ln y,  \qquad  s = T-t\qquad \hbox{where}\qquad  0 \leq s\leq T,\ \ x\in\R, 
\]
and the transformation:
\begin{equation}
\label{psi-phi}
\psi(s,x) = \gamma y \varphi(t,y)\qquad\hbox{where}\quad \gamma =\frac{c\sqrt{a}}{\Delta\mu}.
\end{equation}
Then the original HJB equation \eqref{HJBE2} stated  for the intermediate utility
function $V(t,y)$ can be reformulated for the function $\psi(s,x)$ as
follows:
\begin{subequations}
\begin{equation}\label{HJBE3e}
\frac{\partial \psi}{\partial s}=  \frac{c^2}{2} \frac{\partial}{\partial x}
\left(
\left[1+\frac{\partial}{\partial x}\right]\left(\psi - \frac{1}{\psi}\right) +\psi\left(
\frac{2}{c^2}(\E e^{-x}+\alpha) - \frac{\psi}{\gamma}
\right)
\right)
\end{equation}
for $s\in (0,T],\ x\in \R$. The solution $\psi$ is subject to the initial condition
\begin{equation}\label{HJBE3c}
\psi(0,x) = -\gamma \frac{U\DD(e^x)}{U\D(e^x)}e^x, \qquad\mbox{for}\ \  x\in \R.
\end{equation}
\end{subequations}

Therefore the optimal $\tilde{\theta}$ arising from the implicit equation \eqref{IE} and originally given by \eqref{ttm}, now expressed in terms of new variables $(s,x)$ and $\psi(s,x)$ takes the subsequent form
\begin{equation}\label{ttmm}
\tilde{\theta}(t,y) = \frac{b}{a}  + \frac{\gamma \Delta\mu }{a \psi(T-t, \ln
y)}.
\end{equation}

\medskip
\begin{remark}
It must be remarked that the HJB equation \eqref{HJBE3e}--\eqref{HJBE3c} is not applicable to the Life-cycle model  (see e.g. \cite{BODIE1992}). Notice that in the Life-cycle model, the stock to bond ratio $\tilde\theta$ is designed in a way it depends only on the age $a$ of a future pensioner. A typical choice for $\tilde\theta$ in the Life cycle model is $\tilde\theta=1- a/100$. Therefore such a stock to bond ratio $\tilde\theta$ is independent of $y$ variable. 
Taking into account \eqref{ttmm}, we obtain 
\begin{equation}\label{LCc}
\frac{\partial\psi}{\partial x} (s,x) = 0 , \qquad\mbox{for}\ \  x\in \R \ \mbox{and} \ s \in [0,T],
\end{equation}
i.e. $\psi$ is constant in the $x$- variable. Then equation \eqref{HJBE3e} can be reworded to:
\begin{equation}\label{HJBEeLC}
\frac{\partial \psi}{\partial s} = \E e^{-x}\psi, \qquad\mbox{for}\ \  x\in \R \ \mbox{and} \ s \in [0,T].
\end{equation}
Clearly, the above equality is impossible as the function $\psi$ depends on $s$ only. As a consequence, the Life-cycle model can not be described by the dynamic stochastic optimization model.
\end{remark}

\subsection{The constant relative risk aversion (CRRA) utility function}

In this part we discuss a suitable choice of the utility function $U$. 
We must emphasize that the utility function may vary across investors as it 
represents their attitude to risk. According to Arrow and Pratt the attitude
to risk can be expressed in terms of the so-called 
coefficient of relative risk aversion defined as 
$C(y)=-y U''(y)/U'(y)$. A constant relative risk aversion $C(y)\equiv d>0$ 
for every $y>0$ would imply that an investor tends to hold a constant proportion 
of his wealth in any class of risky assets as the wealth varies. The reader
is refereed to a vast economic literature addressing the problem of a proper
choice of investor's  utility function 
 (see e.g. Friend \& Blume \cite{FRIEND1975}, Pratt \cite{PRATT1964} and Young \cite{YOUNG1990}). 

In the case of a constant relative risk aversion $C(y)\equiv d>0$ 
an increasing utility function $U$ is uniquely (up to an multiplicative and
additive constant) given by
\begin{equation} \label{isoel}
U(y) = -y^{1-d}\quad\textstyle{\rm if}\; d>1\,,\ \ \ 
U(y) =   \ln(y) \quad\textstyle{\rm if}\; d=1\,, \ \ \ 
U(y) =  y^{1-d} \quad\textstyle{\rm if}\; d<1\,.
\end{equation}
The coefficient $d$ of relative risk aversion  plays an important role in many fields of theoretical economics. There is a wide consensus that the value should be less than 10 (see e.g Mehra and Prescott \cite{MEHRA1985}). In our numerical experiments we considered values of $d$ close to 9. But it could be also lower for lower equity premium. It is worth to note that the CRRA function is a smooth, increasing and strictly concave function for $y>0$.

For the purpose of our forthcoming analysis we consider the utility function $U(y)$ (i.e.
the terminal condition for \eqref{HJBE}) of the form
\begin{equation}\label{util}
    U(y) = -y^{1-d} \qquad \hbox{where} \qquad d>1. 
\end{equation}
The function $U$ is a smooth strictly increasing concave function. Now it
should be obvious that the power like behavior of the utility function $U(y)=-y^{1-d}$ 
leads to the constant initial condition \eqref{HJBE3c}, i.e. 
\begin{equation}
    \psi(0,x) = \gamma d, \quad \hbox{for any} \ \ x\in \R\,.
\end{equation}

\subsection{Construction of appropriate sub- and super-solutions}

In this part we shall derive effective lower and upper bounds of a solution $\psi$ to equation (\ref{HJBE3e}). We restrict our attention to the case when the initial condition $\psi(0,x)$ is constant, i.e. the function $U$ is the CRRA utility function of the form $ U(y) = -y^{1-d}$ for some $d>0$. 

Equation (\ref{HJBE3e}) represents a fully nonlinear parabolic equation of the form
\begin{equation}
\label{fullynonlinear}
\frac{\partial \psi}{\partial s}=\mathcal{H}(t,y,\psi,\frac{\partial \psi}{\partial y}, \frac{\partial^2 \psi}{\partial y^2}).
\end{equation}
Notice that the right hand side of (\ref{fullynonlinear}) is a strictly parabolic 
operator such that
\[
\frac{\partial \mathcal{H}}{\partial q} (t,y,\psi,p,q) = \frac{c^2}{2}\big( 1+\frac{1}{\psi^2}\big) \ge \frac{c^2}{2}>0. 
\]
In what follows, we shall construct positive sub- and super-solutions to the fully nonlinear parabolic equation  (\ref{fullynonlinear}). The idea behind the construction of suitable sub- and super-solution is rather simple and it takes into account the form of the terminal condition $U(y)$ for the intermediate 
utility function $V(t,y)$ at $t=T$. With regard to the presence of an advective term in the equation for the function $V$ it is therefore reasonable to compare $V(t,y)$ and the translated terminal function $U(y+\varsigma(T-t))$ where $\varsigma$ is a positive function to be determined later. In terms of the transformed function $\psi$ a suitable candidate for a sub- or super-solution to  (\ref{fullynonlinear}) can be therefore sought in the
form: 
\begin{equation}
\label{subsuper}
\psi(s,x) = -\gamma e^x \frac{U^{\prime\prime}(e^x + \varsigma(s))} {U^{\prime}(e^x + \varsigma(s))}
\end{equation}
where $\varsigma(s)\ge 0$, $ \varsigma(0)=0,$ is a smooth function to be determined later. Assuming $ U(y) = -y^{1-d}$ we have
\[
\psi(s,x) = \gamma d \frac{1}{1 + \varsigma(s) e^{-x}}.
\]
Next we calculate
\[
\frac{\partial \psi}{\partial s}= -  \gamma d  \frac{\varsigma^\prime(s) e^{-x}}{(1 + \varsigma(s) e^{-x})^2},
\ \ 
\frac{\partial \psi}{\partial x}=  \gamma d  \frac{\varsigma(s) e^{-x}}{(1 + \varsigma(s)
e^{-x})^2},
\]
where $(.)^\prime$ stands for the derivative with respect to $s$. Now it is an easy calculus to verify the following identity:
\[
\frac{\partial}{\partial x}\left( 
\left[1+\frac{\partial}{\partial x}\right]\left(\psi-\frac{1}{\psi}\right) \right)= \gamma d
\frac{\varsigma(s)^2 e^{-2x}}{(1 + \varsigma(s) e^{-x})^3}.  
\]
Denote by $\mathcal{A} = 2\alpha/c^2, \mathcal{B}=2/c^2, \mathcal{C} = \Delta\mu
/(c\sqrt{a})=1/\gamma$. Then the lower order term in the right hand side $\mathcal{H}$ of equation (\ref{fullynonlinear}) has the form 
\[
\frac{\partial}{\partial x}\left( 
\left[
\mathcal{A} + \E \mathcal{B} e^{-x}  - \mathcal{C} \psi
\right] \psi 
\right) 
=\gamma d  \frac{ e^{-x}}{(1 + \varsigma(s) e^{-x})^2} \left(
\mathcal{A} \varsigma + \E \mathcal{B}   -2\mathcal{C}\gamma  d   \frac{ \varsigma(s) }{1 + \varsigma(s) e^{-x}}
\right).
\]
Hence, 
\[
\frac{(1 + \varsigma(s) e^{-x})^2}{\gamma  d e^{-x}} \mathcal{H} = 
\frac{ 2 \varsigma^2 e^{-x}}{1 + \varsigma(s) e^{-x}} + \mathcal{A} \varsigma + \E \mathcal{B}   
-2\mathcal{C} \gamma  d   \frac{ \varsigma }{1 + \varsigma(s) e^{-x}}.
\]
Clearly, the following two simple inequalities hold:
\[
0\le \frac{ \varsigma^2 e^{-x} }{1 + \varsigma(s) e^{-x}} \le \varsigma,\quad 0\le \frac{ \varsigma }{1 + \varsigma(s) e^{-x}} \le \varsigma
\]
for any $\varsigma \ge 0$ and $x\in \R$. 
Let $\underline{\varsigma}(s)$ and $\overline{\varsigma}(s), s\ge 0,$ be solutions to the linear
ODEs:
\begin{equation}
\label{rovnicenag}
\begin{matrix}
- \underline{\varsigma}^\prime(s) =\frac{c^2}{2}\left( 
(\mathcal{A} - 2\mathcal{C}\gamma  d) \underline{\varsigma}(s) - \E \mathcal{B}
\right),\hfill &
\qquad \underline{\varsigma}(0 )=0,
\cr
\cr
-  \overline{\varsigma}^\prime(s) = \frac{c^2}{2}\left( 
(2+\mathcal{A}) \overline{\varsigma}(s) - \E \mathcal{B}
\right),\hfill &
\qquad \overline{\varsigma}(0 )=0.
\end{matrix}
\end{equation}
Then it is a straightforward calculus to verify that both $\underline{\varsigma}(s)$ as well as $\overline{\varsigma}(s)$ are nonnegative and the functions
\[
\underline{\psi}(s,x) = \frac{ \gamma  d}{1 + \underline{\varsigma}(s) e^{-x}},\quad 
\overline{\psi}(s,x) = \frac{ \gamma  d}{1 + \overline{\varsigma}(s) e^{-x}}, 
\]
are sub- and super-solutions to the strictly parabolic nonlinear equation (\ref{fullynonlinear}), i.e. 
\[
\partial_s\underline{\psi} 
\le \mathcal{H}(t,y,\underline{\psi},\frac{\partial \underline{\psi}}{\partial y}, \frac{\partial^2 \underline{\psi}}{\partial y^2}),
\qquad 
\partial_s\overline{\psi} 
\ge \mathcal{H}(t,y,\overline{\psi},\frac{\partial \overline{\psi}}{\partial y}, \frac{\partial^2 \overline{\psi}}{\partial y^2}),
\]
satisfying the same constant initial condition
$\overline{\psi}(0,x)= \underline{\psi} (0,x) = \gamma  d$ for any $x\in \R$.

Applying the parabolic comparison principle for strongly parabolic equations
(see e.g. \cite{PROTTER}) we deduce the following comparison result:

\begin{thm}
\label{th:CompPsi}
The solution $\psi(s,x)$ to the fully nonlinear parabolic equation (\ref{fullynonlinear}) satisfies the following inequalities:
\[
0<  \frac{ \gamma  d}{1 + \underline{\varsigma}(s) e^{-x}} \le \psi(s,x) \le  \frac{ \gamma  d}{1 + \overline{\varsigma}(s) e^{-x}}<\infty
\]
for any $s\in (0,T)$ and $x\in\R$ where the functions $\underline{\varsigma}(s)$, $\overline{\varsigma}(s)$, $s\ge 0,$ are the
unique solutions to the ODEs (\ref{rovnicenag}), i.e.
\begin{equation}
\label{gfunkcie}
\overline{\varsigma}(s) = \varepsilon (1-\exp(-\overline{\lambda}s))/\overline{\lambda},
\qquad
\underline{\varsigma}(s) = \varepsilon (1-\exp(-\underline{\lambda}s))/\underline{\lambda},
\end{equation}
where $\overline{\lambda}=\alpha-c^2 d$ and
$\underline{\lambda}=\alpha+c^2$, resp.
\end{thm}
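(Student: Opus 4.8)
The plan is to treat the theorem as the combination of three essentially independent tasks: solving the two linear ODEs \eqref{rovnicenag} in closed form, checking that the resulting barriers are nonnegative and correctly ordered, and invoking the parabolic comparison principle. The verification that $\underline{\psi}$ and $\overline{\psi}$ are genuine sub- and super-solutions has already been reduced, in the computation preceding the statement, to the two elementary inequalities for $\varsigma^2 e^{-x}/(1+\varsigma e^{-x})$ and $\varsigma/(1+\varsigma e^{-x})$, so the analytic content of the proof is concentrated in the ODE solution and the comparison step.

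First I would solve \eqref{rovnicenag}. Each equation has the form $\varsigma^\prime + \lambda\varsigma = \E$, $\varsigma(0)=0$, with $\lambda = \frac{c^2}{2}(\mathcal{A}-2\mathcal{C}\gamma d)$ and $\lambda=\frac{c^2}{2}(2+\mathcal{A})$ respectively, the forcing being $\frac{c^2}{2}\E\mathcal{B}=\E$ because $\mathcal{B}=2/c^2$. Using the integrating factor $e^{\lambda s}$,
\[
\varsigma_\lambda(s)=\E\int_0^s e^{-\lambda(s-r)}\,dr=\frac{\E}{\lambda}\bigl(1-e^{-\lambda s}\bigr).
\]
Substituting $\mathcal{A}=2\alpha/c^2$ and $\mathcal{C}\gamma=1$ converts the two rates into $\alpha-c^2 d$ and $\alpha+c^2$, which are the constants $\overline\lambda,\underline\lambda$ appearing in \eqref{gfunkcie}. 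Nonnegativity is immediate from the integral form $\varsigma_\lambda(s)=\E\int_0^s e^{-\lambda t}\,dt\ge 0$, valid for every real $\lambda$ and $s\ge 0$.

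The one point requiring care is the pairing of the two rates with the two barriers. Writing $h(\lambda)=\int_0^s e^{-\lambda t}\,dt$, one has $h^\prime(\lambda)=-\int_0^s t\,e^{-\lambda t}\,dt<0$, so $h$ is strictly decreasing in $\lambda$; hence the slower rate $\alpha-c^2 d$ produces the larger $\varsigma$. Since $\psi=\gamma d/(1+\varsigma e^{-x})$ is decreasing in $\varsigma$, the larger $\varsigma$ must be attached to the lower barrier $\underline{\psi}$ and the smaller $\varsigma$ to the upper barrier $\overline{\psi}$; this is exactly the ordering $\underline\varsigma\ge\overline\varsigma$ that guarantees $\underline{\psi}\le\overline{\psi}$, and it is what must be kept consistent with the labels $\underline\lambda,\overline\lambda$ in \eqref{gfunkcie}.

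Finally I would invoke comparison. The operator is strictly parabolic, $\frac{\partial\mathcal{H}}{\partial q}=\frac{c^2}{2}(1+\psi^{-2})\ge c^2/2>0$, and $\underline{\psi},\overline{\psi}$ are sub- and super-solutions agreeing with $\psi$ at $s=0$ through the common initial value $\gamma d$; the parabolic comparison principle \cite{PROTTER} then yields $\underline{\psi}\le\psi\le\overline{\psi}$, whence the positivity $0<\underline{\psi}$ and finiteness $\overline{\psi}<\infty$ are read off the explicit barriers, both of which lie in $(0,\gamma d]$. The main obstacle is making this comparison step rigorous on the unbounded line $x\in\R$: the extremum of $\psi-\overline{\psi}$ (and of $\underline{\psi}-\psi$) must be protected against escape to $x\to\pm\infty$, and although the ellipticity of $\mathcal{H}$ degenerates from above as $\psi\to 0$, only the lower bound $c^2/2$, i.e.\ monotonicity in $q$, is actually needed. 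Because $\psi$ and both barriers are bounded and lie in $(0,\gamma d]$, a standard barrier (Phragm\'en--Lindel\"of) argument closes this gap; one must also confirm that the sub- and super-solution differential inequalities hold for \emph{every} $x$, which is precisely the role of the two elementary bounds that convert the $x$-dependent residual into the $x$-independent right-hand sides of \eqref{rovnicenag}.
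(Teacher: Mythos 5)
Your proposal is correct and follows essentially the same route as the paper, whose proof of Theorem~\ref{th:CompPsi} consists precisely of the preceding construction of $\underline{\psi},\overline{\psi}$ via the two elementary inequalities, the explicit integration of (\ref{rovnicenag}), and the parabolic comparison principle; your extra care about justifying comparison on the unbounded line is a legitimate refinement that the paper glosses over. The pairing issue you single out is indeed the delicate point: integrating (\ref{rovnicenag}) as written gives $\underline{\varsigma}$ the rate $\alpha-c^2 d$ and $\overline{\varsigma}$ the rate $\alpha+c^2$, which yields exactly the ordering $\underline{\varsigma}\ge\overline{\varsigma}$ your argument requires, so the explicit assignments $\overline{\lambda}=\alpha-c^2 d$ and $\underline{\lambda}=\alpha+c^2$ in (\ref{gfunkcie}) are swapped relative to (\ref{rovnicenag}); this is a labelling slip in the statement, not a gap in your proof.
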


Now, taking into account the relationships (\ref{ttm}), (\ref{psi-phi}) and
integrating the above inequalities for the function $\psi(s,x)$ with respect
to $x$  we obtain the bound for the intermediate utility function $V(t,y)$.

\begin{thm}
\label{th:CompV}
The intermediate utility function $V(t,y)$ satisfies the following inequalities:
\[
-(y+  \overline{\varsigma}(T-t))^{1-d} \le V(t,y) \le  -(y+  \underline{\varsigma}(T-t))^{1-d}
\]
for any $t\in[0,T]$ and $y>0$ where the functions $\underline{\varsigma}(s)$, $ \overline{\varsigma}(s)$, $s\ge 0,$ are given by (\ref{gfunkcie}).
Moreover, the function $y\mapsto V(t,y)$ is strictly increasing and strictly
concave function, 
\[
\frac{\partial V}{\partial y} (t,y) >0 \quad \hbox{and}\ \ 
\frac{\partial^2 V}{\partial y^2} (t,y) <0  \quad \hbox{for any}\ \  t\in[0,T],\  y>0.
\]
\end{thm}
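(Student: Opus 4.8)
The plan is to read off both assertions from the two-sided estimate on $\psi$ proved in Theorem~\ref{th:CompPsi}, transported back through the Riccati substitution \eqref{psi-phi}. Recall that $\psi(s,x)=\gamma y\,\varphi(t,y)=-\gamma y\,V_{yy}/V_y$ with $x=\ln y$, $s=T-t$, so $\psi$ is precisely $\gamma$ times the relative risk aversion of $V$. The argument splits into two parts: first the qualitative claim $V_y>0$, $V_{yy}<0$, which follows directly from the positivity and finiteness of $\psi$; and then the quantitative envelope, which I would obtain by integrating the estimate for $\psi$ in $x$ and then in $y$.

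For the monotonicity and concavity I would argue as follows. Theorem~\ref{th:CompPsi} gives $0<\underline{\psi}\le\psi\le\overline{\psi}<\infty$, and dividing by $\gamma y$ yields
\[
0<\frac{d}{y+\underline{\varsigma}(s)}\le\varphi(t,y)=-\frac{V_{yy}}{V_y}\le\frac{d}{y+\overline{\varsigma}(s)}<\infty .
\]
In particular $-V_{yy}/V_y$ is bounded away from $0$ and from $\infty$, so $V_y$ cannot vanish on $\mathcal{D}$, since a zero of $V_y$ would force $\psi$ to blow up. At the terminal time $V(T,\cdot)=U$ has $U'(y)=(d-1)y^{-d}>0$; continuity of $V_y$ on the connected domain, together with the impossibility of a sign change, then gives $V_y>0$ throughout, whence $V_{yy}=-\varphi\,V_y<0$. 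This establishes strict monotonicity and strict concavity.

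For the bounds I would exploit that the sub- and super-solutions constructed in the previous subsection are exactly the Riccati preimages of the bounding profiles: the functions $V^-(t,y)=-(y+\overline{\varsigma}(T-t))^{1-d}$ and $V^+(t,y)=-(y+\underline{\varsigma}(T-t))^{1-d}$ satisfy $-\gamma y\,V^-_{yy}/V^-_y=\overline{\psi}$ and $-\gamma y\,V^+_{yy}/V^+_y=\underline{\psi}$, respectively. Writing $-V_{yy}/V_y=-\partial_y\ln V_y$ and integrating the chain $\underline{\psi}\le\psi\le\overline{\psi}$ in $x$ produces a pointwise comparison of $V_y$ with $V^+_y$ and $V^-_y$; a second integration in $y$, normalised by the common limit $V,V^{\pm}\to0^-$ as $y\to\infty$ (valid for $d>1$), should deliver $V^-\le V\le V^+$. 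Equivalently, since the Riccati step turns \eqref{HJBE2} into the linear transport equation $V_t+g\,V_y=0$ with $g=\E+\frac{(\Delta\mu)^2}{2a\,\varphi}+\alpha y-\frac{c^2}{2}y^2\varphi$, while $V^-$ and $V^+$ solve the pure-translation equations $V^-_t+\overline{\varsigma}'(T-t)\,V^-_y=0$ and $V^+_t+\underline{\varsigma}'(T-t)\,V^+_y=0$, one may instead compare the endpoints of the characteristics and invoke the monotonicity of $U$; in either route the $\psi$-estimate controls the nonlinearity through $\varphi$.

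The hard part will be the second integration, that is, passing from a bound on $V_y$ to one on $V$ itself. The Riccati variable $\varphi$ determines $V$ only up to a multiplicative factor in $V_y$ and an additive constant, both functions of $t$; concretely the leading coefficient $C(t)=\lim_{y\to\infty}y^{d}V_y(t,y)$ is invisible to the conservation-form equation \eqref{HJBE3e} and must be identified separately from the terminal data and the behaviour at $y\to\infty$. Verifying that $C(t)$ equals the value $d-1$ carried by $V^{\pm}$, so that the monotone comparison of $V_y$ with $V^{\pm}_y$ integrates to the stated envelope, is the delicate point of the argument; this is where the normalisation $\overline{\varsigma}(0)=\underline{\varsigma}(0)=0$ and the boundary analysis at $y\to\infty$ enter, and once that matching is secured the two-sided bound follows at once.
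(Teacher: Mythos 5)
Your route is the same as the paper's: the paper disposes of this theorem in a single sentence (``integrating the above inequalities for the function $\psi(s,x)$ with respect to $x$ we obtain the bound for $V$''), i.e.\ it transports Theorem~\ref{th:CompPsi} back through the Riccati substitution exactly as you propose. Your first part (positivity and finiteness of $\psi$ force $V_y>0$ and $V_{yy}<0$ by a continuation argument from the terminal data) is fine, and is in fact more of an argument than the paper supplies. You have also correctly isolated the one step the paper glosses over: since $\varphi=-V_{yy}/V_y$ is invariant under $V\mapsto m(t)V+n(t)$, the two-sided bound on $\psi$ controls the shape of $V$ in $y$ at each fixed $t$ but says nothing about the $t$-dependent normalisation.

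The gap is that you do not close this step, and the matching you defer to ``boundary analysis at $y\to\infty$'' --- namely $C(t)=\lim_{y\to\infty}y^{d}V_y(t,y)=d-1$ --- actually fails for the value function solving \eqref{HJBE2}. Test $\E=0$: then $\underline{\varsigma}\equiv\overline{\varsigma}\equiv 0$ and the stated bounds would force $V(t,y)\equiv -y^{1-d}$, whereas the explicit Merton-type solution of \eqref{HJBE2} with terminal datum $U(y)=-y^{1-d}$ is $V(t,y)=-e^{-\kappa(T-t)}y^{1-d}$ with $\kappa=(d-1)\bigl(\alpha-\tfrac{1}{2}c^2d+\tfrac{(\Delta\mu)^2}{2ad}\bigr)$, which is nonzero for generic data (with the Section~4 parameters and $d=10$, $e^{-\kappa T}\approx 0.1$). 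Hence $C(t)=(d-1)e^{-\kappa(T-t)}\ne d-1$, and no amount of integrating the $\psi$-estimate can recover this factor, because $\psi$ is blind to it. What the comparison of $\psi$ honestly yields is a bound on the relative risk-aversion profile of $V$, equivalently on the ratio $V_y(t,y)/V_y(t,y_0)$, and hence on $V$ only up to a $t$-dependent affine renormalisation. To obtain an envelope for $V$ itself one must either carry the Merton factor along (replacing the bounds by $-m(t)(y+\varsigma^{\pm}(T-t))^{1-d}$ with $m$ determined from the zeroth-order problem) or verify directly that suitably weighted translates $m^{\pm}(t)\,U(y+\varsigma^{\pm}(T-t))$ are sub- and super-solutions of \eqref{HJBE2}; a direct computation shows the unweighted translates are not. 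You should flag this obstruction explicitly rather than assert that the two-sided bound ``follows at once'' once the matching is secured.
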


Applying the previous theorem and using the expression (\ref{ttm}) for the optimal value 
$\hat\theta$ we are now in a position to state useful bounds for the optimal stock to bond
proportion in the optimal portfolio.

\begin{thm}\label{odhad-theta}
The optimal value $\hat\theta(t,y)$ describing the optimal stock to bond proportion 
in the optimal portfolio satisfies the inequalities:
\begin{equation}
\label{optimal-theta-bounds}
\frac{b}{a} + \frac{\Delta\mu}{a d}\left( 1+\frac{\overline{\varsigma}(T-t)}{y}\right)
\le 
\hat\theta(t,y) 
\le
\frac{b}{a} + \frac{\Delta\mu}{a d}\left(1+\frac{\underline{\varsigma}(T-t)}{y}\right).
\end{equation}
Moreover, as a consequence of the hypothesis (H), we have $\hat\theta(t,y)
\ge 0$ for any $y>0$ and $t\in[0,T]$. If, in addition, the coefficient of
the relative risk aversion satisfies $d \ge \Delta\mu/(a-b)$ then for the terminal
value $\hat\theta(T,y)$ we have $\hat\theta(T,y)=b/a + \Delta\mu/(a d) \le 1$. 
\end{thm}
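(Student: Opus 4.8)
The plan is to read off the bounds directly from the closed-form representation of the optimal control together with the comparison estimates of Theorem~\ref{th:CompPsi}, so that this statement becomes essentially a corollary. First I would start from formula \eqref{ttmm}, namely $\hat\theta(t,y) = \frac{b}{a} + \frac{\gamma\Delta\mu}{a\,\psi(T-t,\ln y)}$, and record the elementary but crucial observation that, under the hypothesis (H) which guarantees $\Delta\mu>0$ and $a>0$ (hence $\gamma = c\sqrt{a}/\Delta\mu > 0$), the map $\psi \mapsto \hat\theta$ is strictly \emph{decreasing} on $\psi>0$. Consequently any two-sided bound on $\psi$ transforms into a two-sided bound on $\hat\theta$ with the inequalities reversed.

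Next I would insert the sandwich $\frac{\gamma d}{1+\underline{\varsigma}(s)e^{-x}} \le \psi(s,x) \le \frac{\gamma d}{1+\overline{\varsigma}(s)e^{-x}}$ from Theorem~\ref{th:CompPsi}, evaluated at $s=T-t$ and $x=\ln y$ so that $e^{-x}=1/y$. Passing to reciprocals (which reverses each inequality) and multiplying by $\gamma\Delta\mu/a$, the factor $\gamma$ cancels against the $\gamma d$ in the denominator, leaving $\frac{\Delta\mu}{ad}\bigl(1+\overline{\varsigma}(T-t)/y\bigr)$ from the upper $\psi$-bound and $\frac{\Delta\mu}{ad}\bigl(1+\underline{\varsigma}(T-t)/y\bigr)$ from the lower $\psi$-bound. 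Adding $b/a$ yields exactly \eqref{optimal-theta-bounds}. This step is pure substitution; the only thing to watch is the direction of the inequalities induced by the reciprocal.

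For the nonnegativity claim I would use the left inequality in \eqref{optimal-theta-bounds}: since (H) gives $b>0$, and $a,d,\Delta\mu>0$ together with $\overline{\varsigma}\ge 0$ (already established in Theorem~\ref{th:CompPsi}) and $y>0$, every summand on the lower bound is nonnegative, whence $\hat\theta(t,y)\ge b/a>0$. Finally, for the terminal value I would set $t=T$, i.e.\ $s=0$; since $\overline{\varsigma}(0)=\underline{\varsigma}(0)=0$ the two outer bounds collapse and force $\hat\theta(T,y)=b/a+\Delta\mu/(ad)$, which is independent of $y$. The inequality $\hat\theta(T,y)\le 1$ is then algebraically equivalent to $\Delta\mu/d \le a-b$, i.e.\ to $d\ge \Delta\mu/(a-b)$; this rearrangement is legitimate precisely because (H) ensures $a>b$, so that $a-b>0$ and no sign reversal occurs.

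I do not expect any genuine obstacle here: the substantive content has already been spent on the harder parabolic sub/super-solution construction behind Theorem~\ref{th:CompPsi}, and what remains is bookkeeping. The single place requiring care is the monotone inversion $\psi\mapsto 1/\psi$, which flips each inequality, so I would verify at the end that the $\overline{\varsigma}$-bound (coming from the \emph{upper} bound on $\psi$) correctly produces the \emph{lower} bound on $\hat\theta$, and symmetrically that $\underline{\varsigma}$ produces the upper bound.
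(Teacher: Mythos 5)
Your proof is correct and follows exactly the route the paper intends: Theorem~\ref{odhad-theta} is stated there as an immediate consequence of Theorem~\ref{th:CompPsi} and the representation \eqref{ttm}--\eqref{ttmm}, which is precisely the reciprocal-and-substitute computation you carry out, with the inequality reversal under $\psi\mapsto 1/\psi$ handled in the right direction. The sign checks from (H) for the nonnegativity claim and the algebra for the terminal case $\overline{\varsigma}(0)=\underline{\varsigma}(0)=0$ are likewise exactly the intended (unwritten) argument.
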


\medskip
It is worth to note that the condition $d \ge \Delta\mu/(a-b)$ can be expressed in terms of average stock/bond returns and their volatilities as follows:
\begin{equation}
\label{dNerovnost}
d\ge \frac{\mu^{(s)} - \mu^{(b)}}{\sigma^{(s)} (\sigma^{(s)} - \varrho \sigma^{(b)} )}.
\end{equation}
In the concrete example of the Slovak fully funded pension fund system (see Section 4) the above constraint reads as $d>1.78$. It means that it is fulfilled in typical market data situations for an individual saver having the coefficient of relative risk aversion $d$ greater than $1.78$. 

\medskip
In order to construct a solution $\psi$ to the problem \eqref{HJBE3e} let us rewrite $\psi(s,x)$ in terms of the asymptotic series with respect to the small parameter $\E$ as follows:
\begin{equation}\label{AS}
    \psi(s,x) = \sum_{n=0}^{\infty}\E^n \psi_n(s,x).
\end{equation} 
The parameter $0<\varepsilon\ll 1$ can be considered as a small parameter. In practical applications of the dynamic stochastic accumulation model, the value of $\varepsilon$ is close to $0.09$ (Slovak pension saving system discussed in Section 4) or $\varepsilon\approx 0.14$ (Bulgarian pension saving system \cite{VULKOV09}). As it could be obvious from the discussion in the next paragraph, the parameter $\varepsilon$ is a natural candidate for a expansion parameter because  we know the explicit solution $\psi_0(s,x)$ for equation 
(\ref{HJBE3e}) for the vanishing parameter $\varepsilon=0$.

\subsection{No contributions -- the zeroth order approximation}
First of all we pay special attention to the first term in the Taylor expansion above, 
$\psi_0(s,x)$. Recall that due  to the power like character of the utility function
$U(y)=-y^{1-d}$, we have
\begin{equation}\label{psi0}
    \psi_0(s,x) =\gamma  d \quad\hbox{for any}\ s\in [0,T],\ x\in\R.
\end{equation} 
Indeed, any constant function in the $x$ variable is a solution to
(\ref{HJBE3e}) with $\varepsilon=0$. Moreover, for the power like function
$U$, the initial condition (\ref{HJBE3c}) is also constant and it is equal to
$\gamma d$. Therefore $\psi_0(s,x) =\gamma  d$ for any $s\in [0,T]$ and
$x\in\R$.

Let us consider the limiting case where there are no defined contributions, i.e. $\E=0$. Then the solution $\psi(s,x)$ coincides with $\psi_0$ and this is why the solution $\psi$ is constant in time $s$ and spatial variable $x$. As a consequence we obtain, for $\E=0$ that the optimal stock to bonds proportion is also constant, i.e.
\[
\hat\theta(t,y) = \frac{\sigma^{(b)}(\sigma^{(b)} -\varrho \sigma^{(s)}) +\frac{\mu^{(s)}-\mu^{(b)}}{d}}{[\sigma^{(s)}]^2 + [\sigma^{(b)}]^2 - 2\varrho \sigma^{(s)} \sigma^{(b)}}
\]
for any $t\in[0,T]$ and $y>0$. This observation is in agreement with Merton and Samuelson's result (c.f. \cite{MERTON1969,SAM1969,MERTON1974,MERTON1971}) stating that the stock to bond proportion is constant and it depends on saver's risk aversion only. Notice that $\theta(t,y)\in[0,1]$ provided that the condition (\ref{dNerovnost}) is fulfilled.

\subsection{The first order approximation}
In order to roughly approximate the function $\psi(s,x)$ for small enough values of the parameter $\E$, we use both the constant and the linear terms corresponding to the asymptotic expansion \eqref{AS} to get
\begin{equation}\label{TElin}
    \psi(s,x) = d\gamma+\E\psi_1(s,x) + O(\E^2) \quad\hbox{as} \ \ \E\to 0^+
\end{equation}
where $\psi_1(s,x)$ is an unknown function to be specified. 
Replacing the original function $\psi(s,x)$ by its linear approximation \eqref{TElin} above in the problem \eqref{HJBE3e}, the Hamilton--Jacobi--Bellman equation for the function $\psi_1(s,x)$ takes the ensuing form 
\begin{subequations}\label{HJBEpsi1}
\begin{equation}\label{HJBEpsi1e}
\begin{split}
    \frac{\partial\psi_1}{\partial s}(s,x)= 
\frac{c^2}{2}\Big[1+&\frac{1}{\psi_0^2}\Big]\frac{\partial^2 \psi_1}{\partial\;x^2}(s,x) 
+\frac{c^2}{2}\Big[1+\frac{1}{\psi_0^2}+\frac{2\delta}{c^2}\Big]\frac{\partial \psi_1}{\partial x}(s,x)
- \psi_0 e^{-x}
\end{split}
\end{equation}
for any $s\in(0,T)$ and $x\in\R$. The solution is subject to the initial condition
\begin{equation}\label{HJBEpsi1c}
    \psi_1(0,x) = 0 \qquad\hbox{for any}\ \  x\in\R,
\end{equation}
where $d\gamma$ is replaced by the constant $\psi_0$ and, for abbreviation, 
$\delta$ stands for the following expression:
\begin{equation}
    \delta  = \alpha-d c^2.
\end{equation}
\end{subequations}
The unique solution of the Cauchy problem (\ref{HJBEpsi1e}) can be found in a separable form:
\begin{equation}
\label{psi1-form}
\psi_1(s,x) = \Phi_1(s) e^{-x}, \quad \hbox{where} \ \ 
\Phi_1(s)= d \gamma \frac{ e^{-\delta s} -1}{\delta}\quad \hbox{for any} \
s\in(0,T), x\in\R.
\end{equation}

Thus if the higher order terms in \eqref{TElin} are omitted, the explicit approximate solution of the problem \eqref{HJBEpsi1} can be written as 
\begin{equation}\label{psi1}
    \psi(s,x) = \psi_0 + \E \psi_1(s,x)+ O(\E^2) 
= d\gamma \Big\{1 + \E \frac{e^{-\delta s}-1}{\delta}e^{-x}\Big\} +
O(\E^2)
\end{equation}
as $\E\to 0^+$. Expanding the optimal value $\hat\theta(t,y)$ and using the
formula (\ref{ttm}) we obtain the first order approximation of
$\hat\theta(t,y)$ in the form
\begin{equation}\label{oThetaLin}
    \hat{\theta}(t,y) = \frac{b}{a} +\frac{\Delta\mu}{a d}\Big[1 
+ \frac{\E}{y}\frac{1-e^{-\delta(T-t)}}{\delta}\Big] + O(\E^2).
\end{equation}
Since $\delta=\alpha-d c^2$ is the same constant as $\overline{\lambda}$
entering the expression for the lower bound of $\hat\theta$ (see
(\ref{optimal-theta-bounds})) we may conclude that the first order
approximation of the optimal value of $\hat\theta$ coincides with its lower
bound given by (\ref{optimal-theta-bounds}) and (\ref{gfunkcie}).

\subsection{The second term approximation}
For the reason of better approximation of the function $\psi(s,x)$ for small enough values of the parameter $\E$, now we make use the Taylor expansion \eqref{AS} up to the second order term
\begin{equation}\label{TEqua}
    \psi(s,x) = \psi_0+\E\psi_1(s,x) + \E^2\psi_2(s,x)+O(\E^3)
\end{equation}
as $\E\to 0^+$. Recall that we already have computed the first two terms
$\psi_0,\psi_1$ in the expansion. Namely, 
$\psi_0 = d\gamma, \psi_1(s,x) = \Phi_1(s) e^{-x}$, 
where $\Phi_1(s)=\frac{d\gamma}{\delta}\left(e^{-\delta
s}-1\right)$.
The function $\psi_2(s,x)$ is an unknown second order expansion of the function $\psi$ to
be determined.

Inserting the quadratic approximation \eqref{TEqua} of the function $\psi(s,x)$ 
into equation \eqref{HJBE3e} and calculating all the terms of the order $O(\E^2)$ we conclude that the function $\psi_2(s,x)$ is a solution to the following linear parabolic equation:
\begin{equation}\label{HJBEpsi2e}
\begin{split}
    \frac{\partial{\psi_2}}{\partial s}(s,x)
= \frac{c^2}{2}\Big[1+\frac{1}{\psi_0^2}\Big] \frac{\partial^2\psi_2}{\partial x^2}(s,x) 
+ \frac{c^2}{2}\Big[1+\frac{1}{\psi_0^2}+\frac{2\delta}{c^2}\Big]\frac{\partial \psi_2}{\partial x}(s,x) 
+ e^{-2x}\xi_2(s)
\end{split}
\end{equation}
satisfying the initial condition $\psi_2(0,x) = 0$ for $x\in\R$, where
\begin{equation}
\label{zeta}
\xi_2(s) = c^2 \left( \frac{1}{\gamma} - \frac{1}{(d\gamma)^3}\right)
\Phi_1^2(s) - 2 \Phi_1(s).
\end{equation}
The explicit  solution of the problem \eqref{HJBEpsi2e} can be written in a closed form: 
\begin{equation}
\label{psi2}
    \psi_2(s,x) = e^{-2x}\int_0^s\xi_2(z)\exp\left( c^2 (s-z)[1+\frac{1}{\psi_0^2}-\frac{2\delta}{c^2}]\right)dz\,.
\end{equation}
The integral appearing in (\ref{psi2}) can be explicitly computed and it can be expressed as a linear combination of of three exponential functions in the $s$ variable. 

\subsection{The general asymptotic series solution for $n\ge 3$}

From the straightforward analysis of the asymptotic expansion \eqref{AS} first two terms one can deduce not only the separability property of the terms with respect to both variables $s$ and $x$ but the exponential contribution of the variable $x$ to the solution's $n$th term. Thus in order to determine the $n$th term of the asymptotic expansion of the solution to the problem \eqref{HJBE3e} let us reformulate the original solution expansion \eqref{AS} in terms of the special asymptotic series with respect to the small parameter $\E$
\begin{equation}\label{ASS}
    \psi(s,x) = \sum_{n=0}^{\infty}\E^n \psi_n(s,x) = \sum_{n=0}^{\infty}\E^n\Phi_n(s)e^{-nx}
\end{equation}with the constant zero term $\psi_0(s,x) = \Phi_0(s) = d\gamma$ Then the general $n$th term of the solution asymptotic expansion can be determined recursively by the following linear non-homogeneous first-order ordinary differential equation:
\begin{eqnarray}
\label{nTerm}
\Phi_n\D(s) &-& 
\frac{c^2}{2}\left(
n(n-1)(1+\frac{1}{\Phi_0^2}) - \frac{2}{c^2} n \alpha + 2 n d 
\right)\Phi_n(s) \nonumber
\\
&=& \frac{c^2}{2}\frac{n(n-1)}{\Phi_0}\sum_{k=1}^{n-1}\Phi_{n-k}(s)\Omega_k(s) 
- n \Phi_{n-1}(s)+\frac{n}{\gamma} \frac{c^2}{2}  \sum_{k=1}^{n-1}\Phi_{n-k}(s)\Phi_k(s)
\end{eqnarray}
for $n\geq 1$ where $\Omega_k(s)$ is represented by the recurrent formula below
\begin{equation}\label{nTermOmega}
    \Omega_n(s) = -\frac{1}{\Phi_0}\sum_{k=0}^{n-1}\Omega_k(s)\Phi_{n-k}(s), \qquad \Omega_0 = \frac{1}{\Phi_0}.
\end{equation}
Then by solving the recurrent differential equation \eqref{nTerm} for the $n$th term we obtain
\begin{equation}
\Phi_n(s) = \int_0^s\xi_n(z)
\exp\left(
\frac{c^2}{2}
\big[
n(n-1) (1+\frac{1}{\Phi_0^2})-\frac{2 n \alpha}{c^2}+ 2 d n
\big](s-z)
\right) dz
\end{equation}
where
\begin{equation}
    \xi_n(z) = \frac{c^2}{2} \frac{n(n-1)}{\Phi_0}\sum_{k=1}^{n-1}\Phi_{n-k}(z)\Omega_k(z) - n \Phi_{n-1}(z)+\frac{n}{\gamma}\frac{c^2}{2}\sum_{k=1}^{n-1}\Phi_{n-k}(z)\Phi_k(z).
\end{equation}

\subsection{Qualitative behavior of the optimal value $\hat\theta$}

In this section we will be concerned with some useful analytic properties of the value $\hat\theta(t,y)$. We restrict our attention to the first order approximation of $\hat\theta$ given by the leading terms in (\ref{oThetaLin}), i.e. 
\begin{equation}
\label{oThetaLin1}
\hat{\theta}(t,y) = \frac{b}{a} +\frac{\Delta\mu}{a d}\Big[1 
+ \frac{\E}{y}\frac{1-e^{-\delta(T-t)}}{\delta}\Big].
\end{equation}
We will show that even this first order approximation is capable of capturing all interesting phenomena that are present in our dynamic stochastic optimization problem for optimal choice of the stock to bond proportion in pension fund portfolios.

It is worth to note that $(1-\exp(-\delta (T-t)))/\delta \ge 0$ for any $\delta\in\R$. Hence it is easy to verify that
\begin{equation}
\label{y-t-nerovnosti}
\frac{\partial\hat\theta}{\partial y} (t,y) < 0 \qquad\hbox{and}\ \ 
\frac{\partial\hat\theta}{\partial t} (t,y) < 0
\end{equation}
for any $t\in [0,T), y>0$.
It means that the optimal stock to bonds proportion is a decreasing function with respect to time $t$ as well as to the amount $y>0$ of yearly saved salaries. 

\subsubsection{Sensitivity of the optimal value with respect to the small parameter $\E$}
First we consider the dependence of the optimal value $\hat\theta$ on the small parameter $\E>0$ representing the percentage of the transfer of yearly salary to pensioner's account. It follows from (\ref{oThetaLin1}) that 
\begin{equation}
\label{theta-derE}
\frac{\partial\hat\theta}{\partial\E} (t,y) = 
\frac{\Delta\mu}{a d y} \frac{1-e^{-\delta(T-t)}}{\delta} >0
\end{equation}
for any $0\le t<T$ and $y>0$. As a consequence of the above inequality we may deduce that, the optimal value $\hat\theta$ is an increasing function in $\E$. Taking into account the possible application in the dynamic accumulation pension saving model, we can conclude that the higher percentage $\E$ of salary transferred each year to a pension fund would lead to higher optimal stock to bond proportion $\hat\theta$. 

Supposing that the percentage $\E$ represents investor's net contributing ratio, i.e. $$\E = (1-\kappa)\tilde{\E}$$
where $\kappa$ are managing costs, a regular fee charged by the the pension fund management institutions administering investor's private pension account and $\tilde{\E}$ stands for the gross salary ratio of the financial transfer. In the Slovak pension system one has $\tilde\E=0.09$ and $\kappa=0.01$, i.e. $\E = 0.0891$.
Evidently, \eqref{theta-derE} results in
\begin{equation}
\label{theta-derK}
\frac{\partial\hat\theta}{\partial \kappa} (t,y) = 
-\tilde{\E} \frac{\Delta\mu}{a d y} \frac{1-e^{-\delta(T-t)}}{\delta} <0
\end{equation}
for any $0\le t<T$ and $y>0$. It means that the increase in  managing costs implies decrease in the stock to bond proportion, as expected.

\subsubsection{Dependence of the optimal value on the saver's risk aversion}
Our next sensitivity analysis is focused on the dependence of the optimal value $\hat\theta$ on the coefficient $d$ measuring saver's risk aversion. Again, it follows from (\ref{oThetaLin1}) that 
\begin{equation}
\label{theta-derd}
\frac{\partial\hat\theta}{\partial d} (t,y)= 
-\frac{\Delta\mu}{a d^2} 
\left[1+ \frac{\E}{y} \omega(T-t)
\right],\ \  \hbox{where}\ \ \omega(s) = \frac{1-e^{-\delta s}}{\delta}\left(1-\frac{d c^2}{\delta} 
\right) 
+ s d c^2 \frac{e^{-\delta s}}{\delta}.
\end{equation}
Since $\omega(0)=0$ and $\omega^\prime(s) = (1- s d c^2) \exp(-\delta s)$ we may conclude that $\omega(T-t)$ is positive, and, consequently  $\frac{\partial\hat\theta}{\partial d} (t,y)<0$ provided that the coefficient of the saver's relative risk aversion satisfies 
\begin{equation}
\label{condition-d}
d \le \frac{1}{c^2 T} \equiv \frac{[\sigma^{(s)}]^2 + [\sigma^{(b)}]^2 - 2 \varrho \sigma^{(s)} \sigma^{(b)}}{[\sigma^{(s)}]^2 [\sigma^{(b)}]^2 (1-\varrho^2) T }.
\end{equation}
Notice that the fraction $\frac{1}{c^2 T}\approx 306$ in our market data parameter settings discussed in the next section. In other words condition (\ref{condition-d}) is fullfiled for typical values of saver's risk aversion coefficient $d\approx 10$. In summary, we have shown that the optimal stock to bond proportion $\hat\theta$ is a decreasing function with respect to the saver's risk aversion. In other words, higher risk aversion leads to less amount of stocks in saver's portfolio, as expected.

\subsubsection{Sensitivity with respect to average stock returns}
Another important sensitivity analysis is concerned with the dependence of the optimal value $\hat\theta$ on the performance of the stock part of the portfolio. More precisely, we shall study the dependence of $\hat\theta$ on the average stock return $\mu^{(s)}$. Such an analysis can be useful when the stock market is unstable and is exposed to large variations. 

Taking into account expression for $\delta=\alpha - d c^2 $ and $\alpha=\mu^{(b)} -\beta + \frac{b}{a}\Delta\mu$ (see (\ref{alpha})) one can easily verify that 
\begin{equation*}
\frac{\partial\hat\theta}{\partial \mu^{(s)}} (t,y)= 
\frac{1}{a d} 
\left[1+ \frac{\E}{y} \omega(T-t)
\right] \hbox{where}\ \ \omega(s) = \frac{1-e^{-\delta s}}{\delta}
\left(1-\frac{\Delta\mu}{\delta}\frac{b}{a} \right)
+ \Delta\mu \frac{b}{a} \frac{s\, e^{-\delta s}}{\delta}.
\end{equation*}
Again, as $\omega(0)=0$ and $\omega^\prime(s) = (1- \Delta\mu \frac{b}{a} s) \exp(-\delta s)$ we may conclude that $\omega(T-t)>0$, and, consequently 
$\frac{\partial\hat\theta}{\partial \mu^{(s)}} (t,y) >0$ provided that $1-\Delta\mu \frac{b}{a} T  \ge 0$. The latter condition can be reformulated as 
\[
\mu^{(s)} - \mu^{(b)} < 
\frac{[\sigma^{(s)}]^2 + [\sigma^{(b)}]^2 - 2 \varrho \sigma^{(s)} \sigma^{(b)}}
{\sigma^{(b)} (\sigma^{(b)} -  \varrho \sigma^{(s)}) T }.
\]
Also in this case the above structural condition is fulfilled because it reads as $\mu^{(s)} - \mu^{(b)} < 3.19$ and $\mu^{(s)} - \mu^{(b)} \approx 0.0512$ in the application discussed in the next section. Hence we may conclude that the optimal stock to bond proportion $\hat\theta$ is an increasing function with respect to the average stock return $\mu^{(s)}$.

\subsubsection{Sensitivity with respect to the growth rate}
Finally we discuss the dependence of the optimal value $\hat\theta$ on the growth rate $\beta$. 
Again, taking into account expression for $\delta=\alpha - d c^2 $ and $\alpha=\mu^{(b)} -\beta + \frac{b}{a}\Delta\mu$ we obtain  
\begin{equation*}
\frac{\partial\hat\theta}{\partial \beta } (t,y)= 
\frac{\Delta\mu}{a d} 
\frac{\E}{y} \omega(T-t)
\hbox{where}\ \ \omega(s) = \frac{1-e^{-\delta s}}{\delta^2}
-  \frac{s\, e^{-\delta s}}{\delta}.
\end{equation*}
Similarly as in the previous cases, as $\omega(0)=0$ and $\omega^\prime(s) =  s \exp(-\delta s)>0$, we may conclude that $\omega(T-t)>0$, and, consequently 
$\frac{\partial\hat\theta}{\partial \beta}(t,y) >0$.
Therefore the optimal stock to bond proportion $\hat\theta$ is an increasing function with respect to the wage growth $\beta$.

\section{Slovak pension system and calibration of the model parameters}

We have tested the proposed model on the second pillar of the Slovak pension system. According to Slovak legislature the percentage of salary transferred each year to a pension fund is 9\%, i.e. $\E=0.09$. It means that $\E$ can be considered as a small parameter. We have assumed the overall time period $T=40$ of saving of an individual pensioner. The average value of the wage growth in Slovakia for the period of 40 years has been adopted  from the paper by Kvetan et al. \cite{KVETAN2007} and has been estimated (in average value) as for 5\% p.a., i.e. $\beta=0.05$. Similarly as in Kilianov\' a et al. \cite{KILIANOVA2006}, stocks have been represented by the S\&P500 Index.  For the purpose of the comparison of results we have taken the same time period (Jan 1996-June 2002) yielding the average stock return $\mu^{(s)} = 10.28\%$ with the standard deviation $\sigma^{(s)} = 16.90\%$. As the modeling of bond returns is concerned we have considered the term structure of the zero coupon BRIBOR.\footnote{BRIBOR (Bratislava Interbank Offered Rate) is the former term structure in Slovakia till 1.1.2009}
Parameters of bond returns $\mu^{(b)}$ and their volatilities $\sigma^{(b)}$ have been taken from \cite{KILIANOVA2006} (see also calibration results of BRIBOR term structures from \cite{SC2}). We considered the average yield $\mu^{(b)}=5.16\%$ on the one year bond  with standard deviation $\sigma^{(b)}=0.882\%$ p.a. The correlation between stock and bond returns was set to $\varrho = -0.1151$. It is the same correlation values as in \cite{KILIANOVA2006}.

\begin{figure}
\begin{center}
\includegraphics[width=0.35\textwidth]{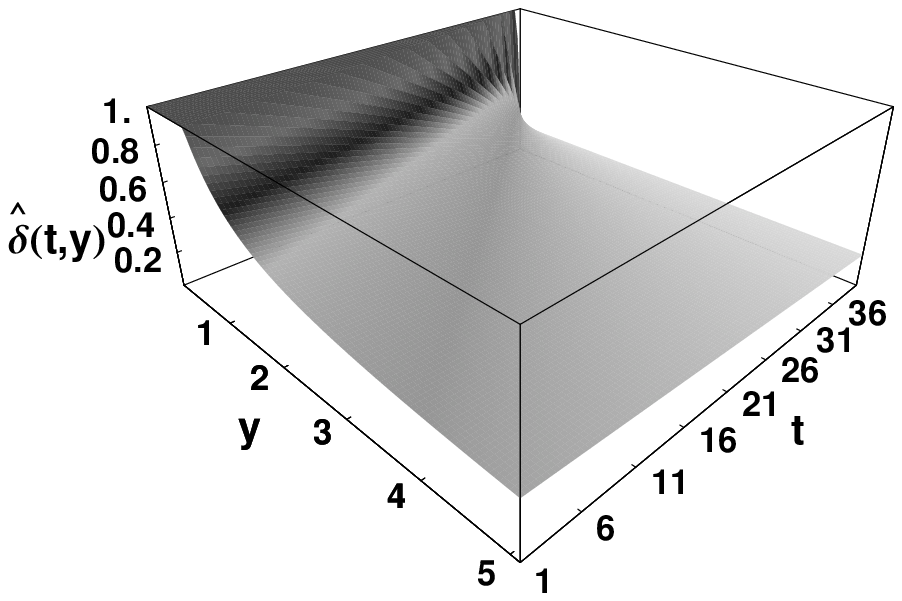}
\qquad
\includegraphics[width=0.25\textwidth]{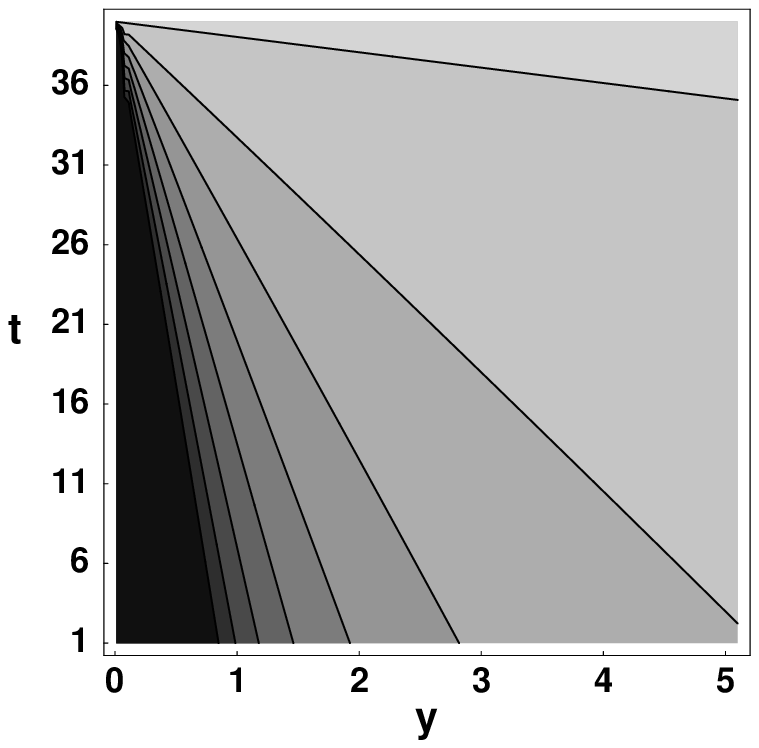}
\end{center}
\caption{
3D graph (left) and countour plot (right) of the optimal value 
$\hat\theta(t,y)$ computed from  the first order approximation of $\psi$.}
\label{obr-graf-analytic} 
\end{figure} 

In Fig.~\ref{obr-graf-analytic} we present the 3D plot as well as the contour plot of the optimal stock to bond proportion $\hat\theta(t,y)$ as a function of the time $t\in [0,T]$ and the level $y>0$ of saved yearly salaries. As for the approximate value of $\hat\theta(t,y)$  we considered the first order expansion given by (\ref{oThetaLin1}). We assumed the saver's risk aversion coefficient $d=10$. We cut-off values of the function $\hat\theta$ by the upper bound $1$, i.e. we in fact plotted the cutted function $(t,y) \mapsto \min\{\hat\theta(t,y), 1\}$.

\begin{figure}
\begin{center}
\includegraphics[width=0.35\textwidth]{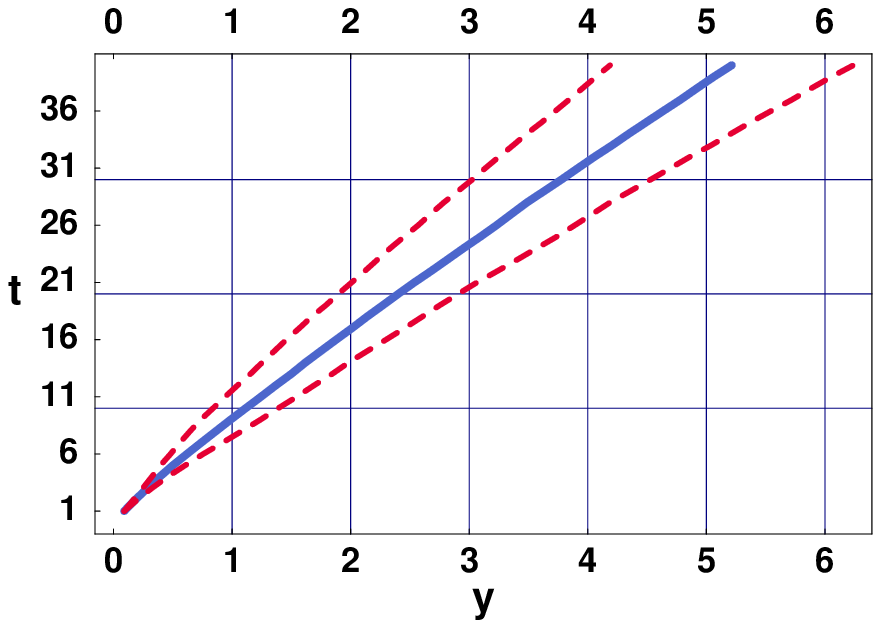}
\includegraphics[width=0.35\textwidth]{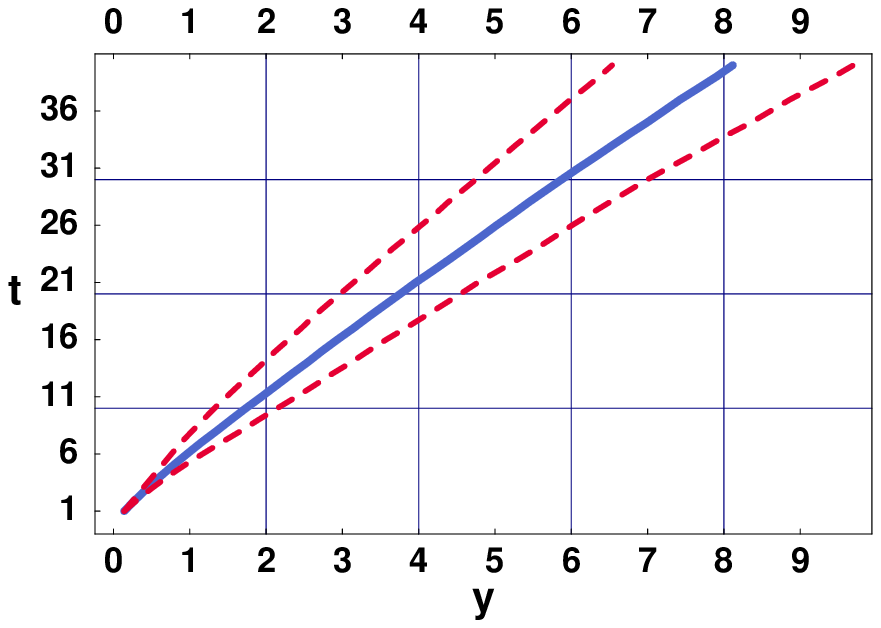}
\end{center}
\centerline{a) \ \ $\varepsilon=0.09$ \hglue2.5truecm b)\ \  $\varepsilon=0.14$}
\caption{
The mean wealth $\mathbb{E}(y_t)$ (solid line) and intervals $\mathbb{E}(y_t)\pm \sigma(y_t)$ (dashed lines) obtained by 10~000 Monte-Carlo simulation of the recurrent equation (\ref{CRE}) for different values of $\varepsilon$.}
\label{obr-graf-simul} 
\end{figure} 

In Fig.~\ref{obr-graf-simul} we present the mean wealth $\mathbb{E}(y_t)$ (solid line) obtained by 10~000 Monte-Carlo simulations of random paths $\{y_t,\  t=1,...,T\}$ calculated according to the recurrent equation 
$
y_{t+\tau} = F_t^{\tau}(\theta, y_t, Z), 
$
where $Z\sim N(0,1)$ and $\theta=\hat\theta(t,y_t)$. We take one year period $\tau=1$. The function $F_t^1$ was defined as in (\ref{CRE}), i.e. 
$F_t^{\tau}(\theta, y_t, z) = y_t \exp\left(  \left( \mu_t(\theta) - \beta_t -\frac{1}{2} \sigma_t^2(\theta)  \right) \tau  + \sigma_t(\theta) z \sqrt{\tau} \right) + \varepsilon \tau$. The dashed line represent the mean wealth plus/minus one standard deviation of the random variable $y_t$ at time $t\in[0,T]$. The simulation were obtained by using the optimal stock to bond proportion  $\theta=\hat\theta(t,y_t)$ depending on the value of simulated yearly saved salary $y_t$ at time $t$. For the parameter $\varepsilon=0.09$ (Slovakian pension system) we observe that at the end of simulation period $t=T$ the averaged saved salary $\mathbb{E}(y_T)\approx 5.2$ meaning that the saver following the optimal strategy given by $\hat\theta(t,y)$ has accumulated $5.2$ multiples of his last yearly salary. On the other hand, for the higher value  $\varepsilon=0.14$ (Bulgarian pension system) we obtain the averaged saved salary $\mathbb{E}(y_T)\approx 8.1$. 

\begin{figure}
\begin{center}
\includegraphics[width=0.35\textwidth]{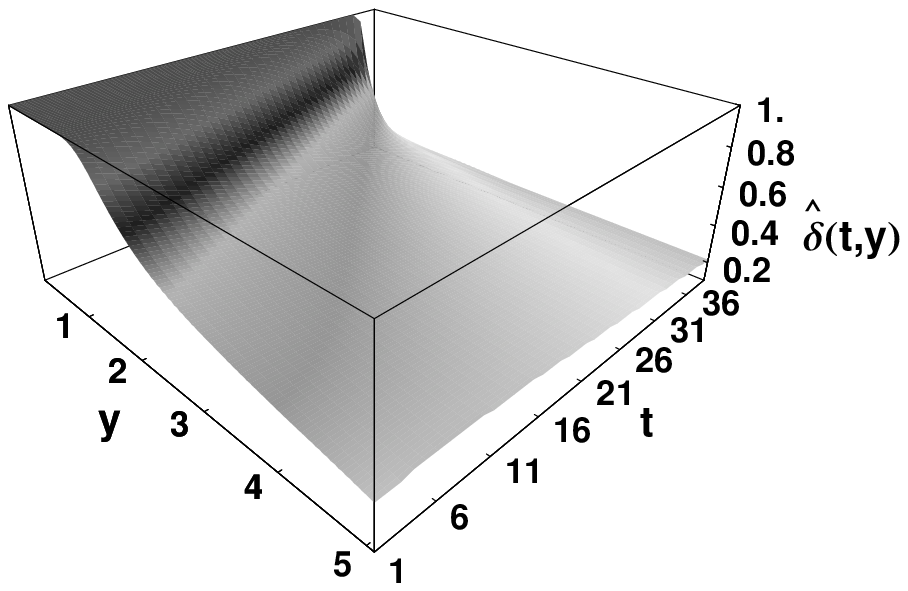}
\qquad
\includegraphics[width=0.25\textwidth]{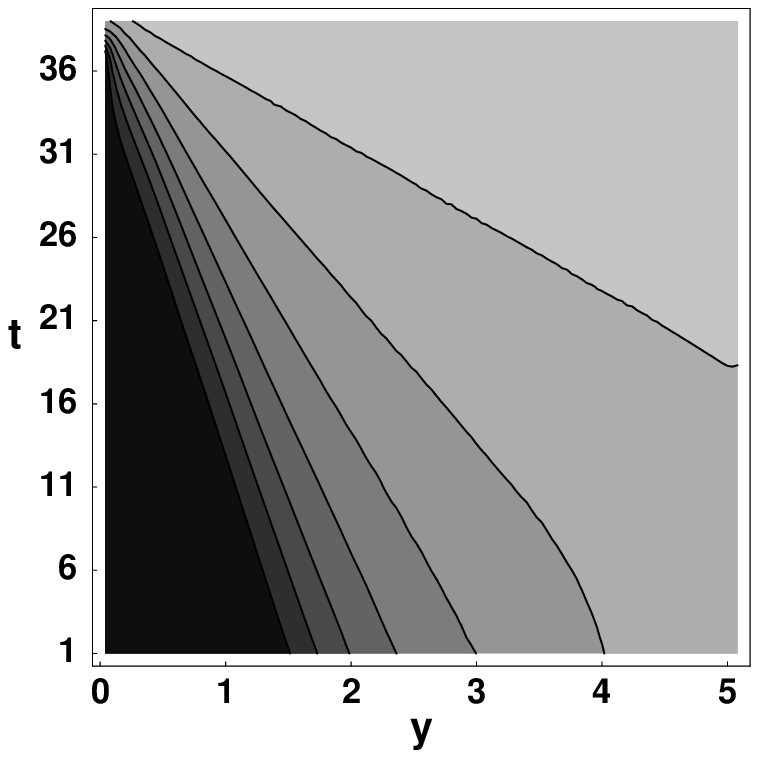}
\end{center}
\caption{
3D graph (left) and countour plot (right) of the optimal value $\hat\theta(t,y)$ computed by the discrete method from \cite{KILIANOVA2006}.}
\label{obr-graf-discrete} 
\end{figure} 

Finally, in Fig.~\ref{obr-graf-discrete} we present results that were computed for the same model parameters and saver's risk aversion $d=10$ but using the discrete model derived in \cite{KILIANOVA2006}. Notice that in the discrete model \cite{KILIANOVA2006} we restrict the optimal stock to bond proportion $\hat\theta$ to belong to the interval $\Delta_t\equiv [0,1]$. We can see and graphically compare the results obtained by our analytic approximation solution and those of discrete model have the same qualitative behavior. Comparison of the function $\hat\theta(t,y)$ computed by the time discrete and continuous dynamic stochastic model is depicted in Fig.~\ref{obr-graf-comparison}. The maximal difference $0.33$ is attained at $t=1$ and $y\approx1.3$. 

\begin{figure}
\begin{center}
\includegraphics[width=0.32\textwidth]{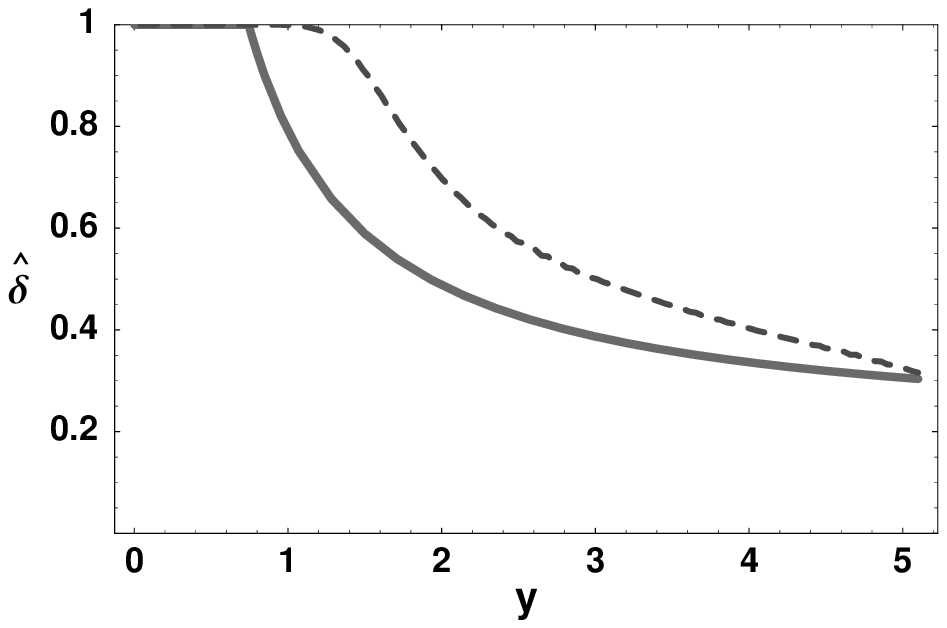}
\includegraphics[width=0.32\textwidth]{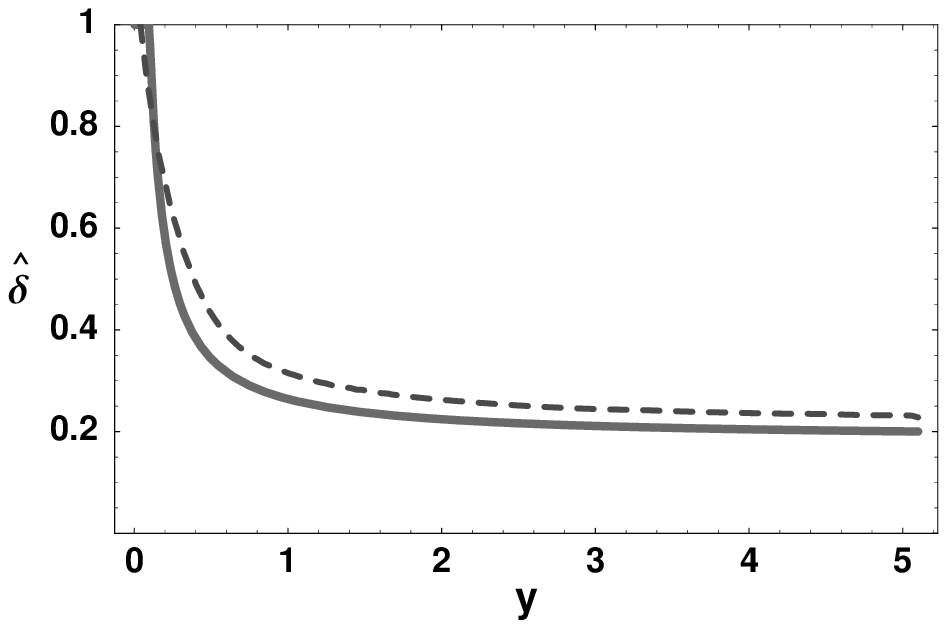}
\includegraphics[width=0.32\textwidth]{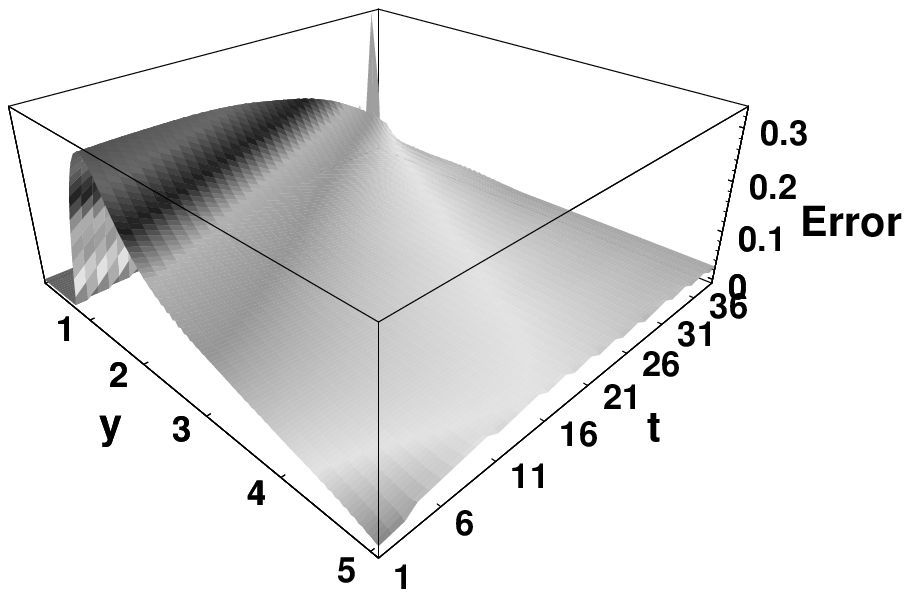}
\end{center}
\caption{
Comparison of the function $\hat\theta(t,y)$ computed by the discrete (dashed line) and continuous (solid line) model for different times a) $t=1$ and b) $t=35$. The overall error plot of their difference is shown in c).}
\label{obr-graf-comparison} 
\end{figure}

Recently, the dynamic stochastic accumulation model derived in Section 2 has been adopted and numerically tested for a Bulgarian pension saving system by Vulkov and Koleva (see \cite{VULKOV09}). The main difference between Slovakian and Bulgarian system consists in different values of the parameter $\varepsilon$. In the Slovak funded  pension system $\varepsilon=0.09$ whereas $\varepsilon=0.14$ for the Bulgarian funded pension pillar (see \cite{VULKOV09}). The impact of a higher value of the parameter $\varepsilon$ on the averaged saved sum ${\mathbb E}(d_T)$ is shown in Fig.~\ref{obr-graf-simul}.  Furthermore, they proposed an efficient numerical approximation scheme for solving the fully nonlinear parabolic equation (\ref{HJBE3e}). They showed the second order of convergence of their numerical scheme. The numerical method can handle general class of initial conditions (\ref{HJBE3c}) leading thus to possibility of considering a wider family of utility functions $U$ than CRRA class of functions.

\section{Conclusions}
We have analyzed a dynamic stochastic accumulation model for determining the optimal value of the stock to bond proportion in the pension saving decision. We showed how the problem can be formulated in terms of a solution to a fully nonlinear parabolic equation. We provided useful bounds on a solution to this equation. Moreover, we performed sensitivity analysis as well as numerical simulations of the model. By expanding a solution to the corresponding Hamilton--Jacobi--Bellman equation into power series we obtained a useful first order approximation that can be used in qualitative analysis of dependence of the optimal strategy on various model parameters. 

\section*{Acknowledgments}
The authors thank anonymous referees for their valuable comments and suggestion.
This research was supported by VEGA 1/0381/09 (CESIUK project) and bilateral Slovak--Bulgarian  project APVV SK-BG-0034-08.


\begin{thebibliography}{99}

\small

\bibitem{ISHIMURA2008}
R. Abe and  N. Ishimura,
Existence of Solutions for the Nonlinear Partial Differential Equation
arising in the Optimal Investment Problem,
Proc. Japan Acad.
{\bf 84}, Ser. A 
(2008),
11--14.


\bibitem{BODIE1992}
Z. Bodie, R. Merton and  W.F. Samuelson, 
Labor Supply Flexibility and Portfolio Choice in a Life-Cycle Model,
Journal of Economic Dynamics \& Control
{\bf 16} 
(1992),
427--449.


\bibitem{BODIE1995}
Z. Bodie,
On the Risk of Stocks in the Long Run,
Financial Analysts Journal
{\bf 51} 
(1995),
18--22.


\bibitem{BODIE2003}
Z. Bodie, J.B. Detemple, S. Otruba and S. Walter, 
Optimal Consumption--Portfolio Choices and Retirement Planning,
Journal of Economic Dynamics \& Control
{\bf 28}
(2003),
1115--1148.

\bibitem{BROWNE1995}
S. Browne,
Optimal Investment Policies for a Firm with a Random Risk Process: Exponential Utility and Minimizing the Probability of Ruin,
Math. Operations Research
{\bf 20}(4) 
(1995),
937--958.



\bibitem{FRIEND1975}
I. Friend and M.E. Blume,
The Demand for Risky Assets, 
The American Economic Review
{\bf 65}
(1975),
900--922.

\bibitem{ISHIMURA2009}
N. Ishimura and Y. Mita, 
A Note on the Optimal Portfolio Problem in Discrete Processes, 
Kybernetika
{\bf 45}(4)
(2009)

\bibitem{KILIANOVA2009}
S. Kilianov\'a,
Stochastic Dynamic Optimization Model for Pension Planning.
{\em Thesis, Comenius University} 2009, 106 pp.


\bibitem{KILIANOVA2006}
S. Kilianov\' a, I. Melicher\v c\'{\i}k and D.  \v{S}ev\v{c}ovi\v{c},
Dynamic Accumulation Model for the Second Pillar of the Slovak
Pension System, 
Czech Journal for Economics and Finance
{\bf 11-12}
(2006),
506--521.

\bibitem{KILIANOVA2009b}
S. Kilianov\'a and G. Pflug,
Optimal pension fund management under multi-period risk minimization,
Annals of Operations Research
{\bf 166} (1)
(2009),
261 -- 270.

\bibitem{VULKOV09}
M. Koleva and L. G. Vulkov,
Quasilinearization Numerical Scheme for Fully Nonlinear Parabolic Problems with Applications in Models of Mathematical Finance, 
submitted.

\bibitem{KVETAN2007}
V. Kvetan, M. Ml\'ynek, V. P\' alen\'{\i}k and  M. Radvansk\'y,
Starnutie, zdravotn\' y stav a determinanty v\'ydavkov na zdravie v podmienkach Slovenska, 
Research studies of Institute of Economics SAV Bratislava, 2007 (in Slovak), 
ISBN: 978-80-7144-160-1.


\bibitem{KWOK1998}
Y.K. Kwok,
{\em Mathematical Models of Financial Derivatives}.
New York, Heidelberg, Berlin:  Springer-Verlag,
1998.


\bibitem{MEHRA1985}
R. Mehra and E. Prescott,
The Equity Premium:  a Puzzle,
Journal of Monetary Economics
{\bf 15}
(1985),
145--161.

\bibitem{MS2009}
I. Melicher\v c\'{\i}k and  D. \v{S}ev\v{c}ovi\v{c},
Dynamic Stochastic Accumulation Model with Application to Pension Savings Management, 
submitted. 

\bibitem{MU2004}
I. Melicher\v c\'{\i}k and C. Ungvarsk\'y,
Pension Reform in Slovakia:  Perspectives of the Fiscal Debt and Pension Level, 
Czech Journal for Economics and Finance
{\bf 9-10}
(2004),
391--404.

\bibitem{MERTON1969}
R.C. Merton,
Lifetime Portfolio Selection Under Uncertainty:  The Continuous-Time Case,
Review of Economics and Statistics
{\bf 51}
(1969),
247--257.

\bibitem{MERTON1971}
R.C Merton,
Optimum Consumption and Portfolio Rules in a Continuous-Time Model, 
Journal of Economic Theory
{\bf 3}
(1971),
373--413.

\bibitem{MERTON1974}
R.C. Merton and  P.A. Samuelson,
Fallacy of the Log-Normal Approximation to Optimal Portfolio Decision-Making over many Periods,
Journal of Financial Economics
{\bf 1}
(1974),
67--94.

\bibitem{MERTON1975}
R.C. Merton,
Theory of Finance from the Perspective of Continuous Time,
Journal of Financial and Quantitative Analysis
{\bf 10}
(1975),
659--674.

\bibitem{MERTON1992}
R.C. Merton,
Continuous Time Finance, Revised ed., Blackwell
Publishers, Oxford, 1992.



\bibitem{PRATT1964}
J.W. Pratt,
Risk Aversion in the Small and in the Large, 
Econometrica
{\bf 32}
(1964),
122--136.


\bibitem{PROTTER} 
M. Protter and H.F. Weinberger,
Maximum Principles in  Differential Equations,
New York, Heidelberg, Berlin: Springer-Verlag, 1984.



\bibitem{SC2}
D. \v{S}ev\v{c}ovi\v{c} and A. Urb\'anov\'a Csajkov\'a,
On a Two-phase  Minmax Method for Parameter Estimation of the Cox, Ingersoll, and Ross Interest Rate Model,
Central European Journal of Operational Research
{\bf 13}
(2005),
169--188.


\bibitem{SAM1969}
P.A. Samuelson,
Lifetime Portfolio Selection by Dynamic Stochastic
Programming,
The Review of Economics and Statistics
{\bf 51}
(1969),
239--246.



\bibitem{YOUNG1990}
H.P. Young,
Progressive Taxation and Equal Sacrifice,
The American Economic Review
{\bf 80}
(1990),
253--266.



\end{thebibliography}
\end{document}